\newcommand{\tens}[1]{{\boldsymbol{#1}}}
\newcommand{\ftens}[1]{{\widehat{\boldsymbol{#1}}}}
\renewcommand{\d}[0]{{\mathrm{d}}}
\renewcommand{\epsilon}{\varepsilon}
\renewcommand{\div}{\mathrm{div}}
\newcommand{\hooke}[0]{{\tens{\mathcal{C}}}}
\newcommand{\navier}[0]{{\tens{\mathrm{N}}}}
\newcommand{\fnavier}[0]{{\ftens{\mathrm{N}}}}
\newcommand{\Id}[0]{{\tens{I}}}
\newcommand{\body}[0]{{\mathcal{B}}}
\newcommand{\dual}[3]{{\left\langle #1, #2\right\rangle_{#3}}}
\newcommand{\smallbullet}{\raisebox{0.5pt}{\scalebox{0.6}{$\bullet$}}}
\DeclareMathOperator\Tr{Tr}
\newtheorem{theorem}{Theorem}
\newtheorem{lemma}[theorem]{Lemma}
\newproof{proof}{Proof}
\newdefinition{remark}{Remark}
\newcommand{\MyComment}[1]{{\Comment{\textcolor{gray}{\texttt{#1}}}}}
\newcommand{\hsub}{_{\text{\scriptsize h}}}
\newcommand{\ysub}{_{\text{\scriptsize y}}}
\newcommand{\Cbb} {\mathbb{C}}
\newcommand{\Rbb} {\mathbb{R}}
\newcommand{\Zbb} {\mathbb{Z}}
\newcommand{\Mcal}{\mathcal{M}}
\newcommand{\Ncal}{\mathcal{N}}
\newcommand{\Bcal}{\mathcal{B}}
\newcommand{\Ocal}{\mathcal{O}}
\newcommand{\Rcal}{\mathcal{R}}
\newcommand{\dy}{\,\text{d}y}
\newcommand{\dz}{\,\text{d}z}
\newcommand{\dyt}{\,\text{d}\bfyt}
\newcommand{\bfxt} {\tilde{\boldsymbol{x}}{}}
\newcommand{\bfyt} {\tilde{\boldsymbol{y}}{}}
\newcommand{\bfzt} {\tilde{\boldsymbol{z}}{}}
\newcommand{\rmi}{\mathrm{i}}
\newcommand{\Fcal}{\mathcal{F}}
\newcommand{\Ecal}{\mathcal{E}}
\newcommand{\Scal}{\mathcal{S}}
\newcommand{\shm}{\hspace*{-0.1em}-\hspace*{-0.1em}}
\newcommand{\shp}{\hspace*{-0.1em}+\hspace*{-0.1em}}
\begin{document}
\title{A Fourier-accelerated volume integral method for elastoplastic contact}

\author[epfl]{Lucas Fr\'erot\corref{cor}}
\ead{lucas.frerot@epfl.ch}

\author[ensta]{Marc Bonnet}
\ead{marc.bonnet@ensta-paristech.fr}

\author[epfl]{Jean-François Molinari}
\ead{jean-francois.molinari@epfl.ch}

\author[epfl]{Guillaume Anciaux}
\ead{guillaume.anciaux@epfl.ch}

\address[epfl]{Institute of Civil Engineering, École Polytechnique Fédérale de Lausanne (EPFL), CH 1015 Lausanne, Switzerland}

\address[ensta]{POEMS (UMR 7231 CNRS-ENSTA-INRIA), ENSTA Paristech, France}

\cortext[cor]{Corresponding Author}


\begin{abstract}
  The contact of solids with rough surfaces plays a fundamental role
  in physical phenomena such as friction, wear, sealing, and thermal
  transfer. However, its simulation is a challenging problem due to
  surface asperities covering a wide range of length-scales. In
  addition, non-linear local processes, such as plasticity, are
  expected to occur even at the lightest loads. In this context,
  robust and efficient computational approaches are required. We
  therefore present a novel numerical method, based on integral
  equations, capable of handling the large discretization requirements
  of real rough surfaces as well as the non-linear plastic flow
  occurring below and at the contacting asperities. This method is
  based on a new derivation of the Mindlin fundamental solution in
  Fourier space, which leverages the computational efficiency of the
  fast Fourier transform. The use of this Mindlin solution allows a
  dramatic reduction of the memory imprint (as the Fourier
  coefficients are computed on-the-fly), a reduction of the
  discretization error, and the exploitation of the structure of the
  functions to speed up computation of the integral operators. We
  validate our method against an elastic-plastic FEM Hertz normal
  contact simulation and showcase its ability to simulate contact of
  rough surfaces with plastic flow.
\end{abstract}

\begin{keyword}
  volume integral equation \sep Fourier \sep plasticity \sep contact
  \sep Mindlin \sep rough surface
\end{keyword}


\maketitle

Tribological phenomena, such as friction and wear, have a significant
impact on the behavior of natural and man-made systems, from the
kilometer scale in the case of earthquakes, down to the nanometer in
nanotribology. In 1939, Bowden and Tabor showed that friction is
intimately linked to the true contact area of mating
surfaces~\citep{bowden_area_1939}, while Archard made the same
statement for adhesive wear in 1953~\citep{archard_contact_1953}. It
is with the goal of quantitatively studying the physics at the true
contact area that various methods for rough surface contact have been
developed. The seminal work of~\citet{greenwood_contact_1966} paved
the way for multi-asperity contact via statistical methods, which
evolved over time to include asperity
curvature~\citep{bush_elastic_1975}, long-range elastic
interaction~\citep{vergne_elastic_1985}, surface
anisotropy~\citep{bush_strongly_1979}, surface power spectral
density~\citep{persson_contact_2006}, etc. Numerical methods have also
been developed and used to study rough surface contact. Although
finite elements methods have been successfully
employed~\citep{hyun_finiteelement_2004, yastrebov_rough_2011},
boundary integral methods have proven very efficient in dealing with
elastic rough contact. These include spectral
methods~\citep{stanley_fftbased_1997, yastrebov_contact_2012,
  rey_normal_2017} and Green's functions
methods~\citep{polonsky_numerical_1999, campana_contact_2007,
  putignano_new_2012}. They have been used to study the true contact
area evolution in adhesionless~\citep{yastrebov_infinitesimal_2015}
and adhesive contact~\citep{carbone_adhesive_2009,
  pastewka_contact_2014, rey_normal_2017}, as well as interfacial
separation~\citep{almqvist_interfacial_2011}, the autocorrelation of
the surface stresses and micro-contacts~\citep{campana_elastic_2008,
  ramisetti_autocorrelation_2011}, the distribution of the areas of
micro-contacts~\citep{frerot_mechanistic_2018,
  muser_contactpatchsize_2018}, the autocorrelation of sub-surface
stresses~\citep{muser_internal_2018}, etc. The contact mechanics
challenge organized by Müser in 2016 references as many as 13
different techniques for adhesive elastic contact, from 12 different
research groups, studying some of the quantities previously
mentioned~\citep{muser_meeting_2017}.

However, elastic theories have difficulties in representing realistic
contact behavior: local contact pressures can easily reach values
higher than the Young's modulus~\citep{muser_internal_2018}, and the
true contact area vanishes as the rough surface spectrum grows
wider~\citep{persson_contact_2006}. The works of
\citet{bowden_area_1939}, \citet{archard_contact_1953}, and more
recently \citet{weber_molecular_2018} have experimentally shown that
elastic theories are not able to model realistic contact
interfaces. It is the consensus of the tribology
community~\citep{vakis_modeling_2018} that contact models need to
evolve to include material non-linearities such as plasticity.

The work of~\citet*{pei_finite_2005}, while pioneering the study of
elastic-plastic rough surface contact (using a finite element
approach), suffers from discretization error (a single element is used
to reproduce the smallest surface wavelength), inaccuracy of an
explicit dynamic relaxation scheme to reduce the simulation cost of a
static calculation, as well as statistical errors from the surface
spectrum choice~\citep{yastrebov_contact_2012} and number of
realizations. Pei et al.\ nonetheless confirmed the key role of
plasticity in quantifying the true contact area and the micro-contact
distribution. Consequently, the objective of our paper is to propose a
high-performance, robust and quantitatively accurate method to study
the contact behavior of elastic-plastic materials with statistically
representative rough surfaces.\enlargethispage*{3ex}

Our method is based on a volume integral
approach~\citep{telles_application_1979, telles_implicit_1991,
  bonnet_implicit_1996}. Similarly to the boundary methods previously
mentioned, the volume integral methods (VIMs) can exactly represent the
elastic behavior of a semi-infinite solid. This limits the volume
discretization to the potentially plastic regions, allowing better
usage of computational resources. VIMs however require knowledge of
singular fundamental solutions and the computation of their volume
convolution with plastic deformations. This operation is costly with
a ``naive'' implementation~\citep{jacq_development_2002}, its algorithmic complexity being $\Ocal(N^2)$ ($N$ being the total number of plastic deformation unknowns). It has been accelerated using 2D fast-Fourier transform (FFT) of the fundamental
solutions~\citep{sainsot_numerical_2002, wang_numerical_2005}, at the
cost of introducing a sampling error in the computed quantities of
interest. Another approach for accelerating the convolution computation
is to use a 3D-FFT~\citep{wang_new_2013}, but this introduces a
periodicity error and requires discretization of a domain more than
twice the volume of the expected plastic zone to reduce these effects,
thereby reducing the attractivity of VIMs for semi-infinite modeling.

Accordingly, we present a volume integral method well suited for
periodic elastic-plastic contact problems that is based on a novel
derivation of the half-space fundamental solutions \emph{directly} in
the 2D partial Fourier domain. The advantage is three-fold: it allows
the use of the 2D-FFT to speed up the convolution computation in the
plane parallel to the contact surface \emph{without introducing
  sampling error}, it \emph{does not require storage} of the discrete
Fourier coefficients of the real-space fundamental solutions, and
permits \emph{optimization of the convolution computation} by
exploiting the structure of the analytical solutions, while keeping
the advantages described above for volume integral methods. Indeed, the algorithmic complexity of our treatment is only $\Ocal(N\log(N))$ per convolution. Moreover, the use of the FFT renders the method matrix-free, making it attractive for use
in conjunction with iterative solvers.

After stating the periodic elastic-plastic contact problem in
Section~\ref{sec:problem_statement}, we derive in
Section~\ref{sec:fundamental} the fundamental solutions needed for the
integral operators of the method and their application in a
discretized periodic context
(Section~\ref{sec:discretized_operators}). We present in
Section~\ref{sec:implicit} the implicit volume integral equation for
the plasticity problem, and its iterative coupling with the elastic
contact problem in Section~\ref{sec:coupling}. We subsequently show
validation cases for the integral operators and the global
elastic-plastic contact method (Section~\ref{sec:validation}) and the algorithmic complexity of the integral operators
(Section~\ref{sec:complexity}), ending with an application of the
method to rough surface contact in Section~\ref{sec:rough}.


\section{Problem statement and overview of solution
  methodology}\label{sec:problem_statement}

Let
$\body := \{\tens{y} \in \Rbb^3 : \tens{y}\cdot\tens{e}_3 \geq 0\}$ be
a deformable semi-infinite elastic-plastic solid of boundary
$\partial\body$, see figure~\ref{fig:schematic_presentation}. Points
$\tens{y}\in\body$ will often be denoted as $\tens{y}=(\bfyt,y_3)$
with $\bfyt:=(y_1,y_2)$. Let $\tens{u}$ be the displacement vector
field of $\body$. The linearized strain tensor $\tens{\epsilon}$ and
the Cauchy stress tensor $\tens{\sigma}$ are respectively given by:
\begin{align}
  \tens{\epsilon}[\tens{u}] & := \frac{1}{2}\left(\tens{\nabla u} + \tens{\nabla u}^T\right),\label{eqn:epsilon}\\
  \tens{\sigma}[\tens{u}, \tens{\epsilon}^p] & := \hooke:\big(\tens{\epsilon}[\tens{u}] - \tens{\epsilon}^p\big),\label{eqn:sigma}
\end{align}
where $\hooke \in \Rbb^{3\times3\times3\times3}$ is the elasticity
tensor, satisfying the usual ellipticity and (major and minor)
symmetries, while $\tens{\epsilon}^p$ is the plastic strain. Here and
thereafter, we follow the usual convention whereby the gradient
operator $\tens{\nabla}$ acts ``to the right'', so that
e.g. $(\tens{\nabla u})_{ij}=\partial_j u_i=u_{i,j}$.  The stress
$\tens{\sigma}$ satisfies the conservation of momentum equation in the
absence of body forces:
\begin{equation}
  \label{eqn:equilibrium}\div\tens{\sigma} = \tens{0}\quad\mathrm{a.e.\ in}\ \body.
\end{equation}
Let $\tens{n} := -\tens{e}_3$ be the external normal of
$\partial \body$, we define the traction vector of the displacement
field $\tens{u}$ as:
\begin{equation}
  \label{eqn:traction}\tens{T}[\tens{u}, \tens{\epsilon}^p] := \tens{\sigma}[\tens{u}, \tens{\epsilon}^p]\Big|_{\partial\body}\cdot \tens{n}.
\end{equation}

\begin{figure}
  \centering
  \begin{tikzpicture}
    \node(0, 0) {\includegraphics[draft=False]{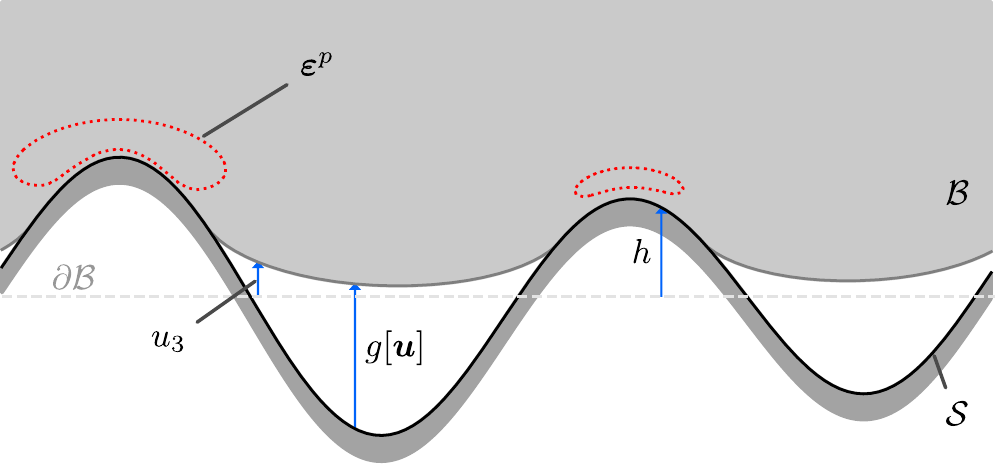}};
    \def\x{-4.5}; \def\y{-2.2};

    \draw[->, thick] (\x, \y) -- ++(1, 0) node[below]{$x_1$};
    \draw[->, thick] (\x, \y) -- ++(0, 1) node[left]{$x_3$};
    \node[below left] at (\x, \y){$x_2$}; \filldraw[fill=white] (\x,
    \y) circle[radius=0.1]; \draw ({\x + sqrt(2)/2*0.1}, {\y +
      sqrt(2)/2*0.1}) -- ({\x - sqrt(2)/2*0.1}, {\y - sqrt(2)/2*0.1});
    \draw ({\x - sqrt(2)/2*0.1}, {\y + sqrt(2)/2*0.1}) -- ({\x +
      sqrt(2)/2*0.1}, {\y - sqrt(2)/2*0.1});
  \end{tikzpicture}
  \caption{\textbf{Schematic representation of a periodic
      elastic-plastic contact problem}, with geometrical contact
    quantities represented by blue arrows and plastic deformation
    zones with dashed red contours. The elastic-plastic body $\body$
    is represented in deformed shape. Note that this schematic is
    taken from a real two-dimensional
    simulation.}\label{fig:schematic_presentation}
\end{figure}

The evolution of the plastic strain is, for definiteness, assumed to
obey a standard $J_2$ plasticity model with an associated flow rule
(see Section~\ref{sec:von_mises} for more details).

\subsection{Contact boundary conditions}

We consider situations where the boundary $\partial \body$ is in
contact with a surface $\Scal$, which is defined as the graph of a
scalar function $h \in C^0(\partial\body)$ and can be viewed as the
boundary of an infinitely stiff solid. The resulting normal force
bringing $\body$ and $\Scal$ together is denoted
$\tens{W} := W\tens{e}_3$. We ignore friction and adhesion between
$\body$ and $\Scal$ and define the gap function as:
\begin{equation}
  g[\tens{u}] := \tens{u}\cdot\tens{e}_3\Big|_{\partial\body} - h,
\end{equation}
which is the separation between the deformed solid $\body$ and the
surface $\Scal$ along $\tens{n}$. The boundary conditions on
$\partial\body$ are hence expressed with the Hertz-Signorini-Moreau
conditions:
\begin{subequations}
  \begin{align}
    g[\tens{u}] \geq 0,\\
    p[\tens{u}, \tens{\epsilon}^p] := \tens{T}[\tens{u}, \tens{\epsilon}^p]\cdot \tens{e}_3 \geq 0,\\
    g[\tens{u}]p[\tens{u}, \tens{\epsilon}^p] = 0.
  \end{align}
\end{subequations}
Moreover, for reasons discussed in Section~\ref{sec:elastic_contact},
we treat the contacting surface $\Scal$ as fixed, and enforce the
applied resulting force $W$ as an additional constraint that will
allow to determine the mean vertical displacement:
\begin{equation}
  \label{eqn:total_pressure}\int_{\partial\body}{p[\tens{u}, \tens{\epsilon}^p]\,\mathrm{d} A} = W.
\end{equation}

\subsection{Horizontally-periodic setting}
As we will employ Fourier methods to solve the elastic-plastic contact
problem, we set the latter in a more natural setting for the
application of the discrete Fourier transform. Let us define the
periodic cell
\begin{align}
  \label{eqn:periodic_cell}\Bcal_p & = \left\rbrack -\frac{L_1}{2}, \frac{L_1}{2} \right\lbrack \times \left\rbrack -\frac{L_2}{2}, \frac{L_2}{2} \right\lbrack \times \mathbb{R}^+,
\end{align}
where $L_1$ (resp.\ $L_2$) is the horizontal dimension of the cell in
the direction $\tens{e}_1$ (resp.\ $\tens{e}_2$). It will become
apparent in Section~\ref{sec:discretized_operators} that this helps
defining the discretization procedure of the continuous operators
presented below, since $\body_p$-periodic fields admit a
representation as Fourier series.

\subsection{Overview of solution methodology}
In this work, we use an integral equation approach for solving the
elastic-plastic contact problem. We take advantage of the well-known
fact that the elastostatic displacement $\tens{u}$ generated in
$\body$ or on $\partial\body$ by any given traction distribution
$\tens{p}$ on $\partial\body$ and eigenstress distribution $\tens{w}$
in $\body$ has the explicit representation
\begin{equation}\label{eqn:operators}
  \tens{u} = \Mcal[\tens{p}] + \Ncal[\tens{w}],
\end{equation}
where the integral operators $\Mcal, \Ncal$ are defined in terms of
elastostatic fundamental solutions. This framework accounts for the
semi-infinite geometry of the problem, which a classical
finite-element method does not, and enforces automatic satisfaction of
equilibrium and elastic constitutive relations. As a result, the
solution of contact and elastic-plastic problems only requires the
satisfaction of relevant relations on the contact surface (using
evaluations of~\eqref{eqn:operators} at surface points) and plastic
regions (exploiting $\tens{\nabla u}$ at internal points),
respectively, with the latter task entailing computations of
$\tens{\nabla}\Mcal[\tens{p}]$ and $\tens{\nabla}\Ncal[\tens{w}]$ for
given $\tens{p}$ or $\tens{w}$.


In addition, the semi-infinite geometry and constitutive uniformity
assumed in this work together imply translational invariance in any
horizontal direction. For example, for any point $\tens{x}\in\body$,
the displacements $\Ncal[\tens{w}](\tens{x})$ and
$\Ncal[\tens{w}(\bullet+\tens{h})](\tens{x}+\tens{h})$ are equal for
any horizontal translation vector
$\tens{h}=(h_1,h_2,0)$. Consequently, the integral operators
$\Mcal,\Ncal$ can be expressed as convolutions w.r.t.\ the horizontal
coordinates; for example there is a tensor-valued function
$\tens{H}(\bfzt,x_3,y_3)$, classically defined for an aperiodic
infinite domain~\citep{fredholm_equations_1900}, such that
\begin{equation}
  \Ncal[\tens{w}](\tens{x})
  = \int_{0}^{\infty} \Big\{ \int_{\Rbb^2} \tens{H}(\bfyt-\bfxt,x_3,y_3):\tens{w}(\tens{y}) \dyt \Big\} \dy_3 \label{generique:N}
\end{equation}
(see Sec.~\ref{mindlin}). We accordingly define the partial
convolution operation along the horizontal coordinates
$\bfxt=(x_1,x_2)$ by
\begin{equation}
  (f\star g)(\bfxt) := \int_{\Rbb^2} f(\bfyt-\bfxt)g(\bfyt) \dyt. \label{eqn:convolutions}
\end{equation}
where the operands $f,g$ may in addition depend on the vertical
coordinates $x_3,y_3$; moreover, the convolution symbol $\star$ will
implicitly retain any tensor contractions involved in the integrals
being recast in convolution form. For example,
reformulating~\eqref{generique:N} using
definition~\eqref{eqn:convolutions} gives
\begin{equation}
  \Ncal[\tens{w}](\tens{x})
  = \int_{0}^{\infty} (\tens{H}\star\tens{w})(\bfxt,x_3,y_3) \dy_3 \label{generique:N:conv}
\end{equation}
\begin{remark}\label{convolvable}
  We emphasize that the partial convolution~\eqref{eqn:convolutions}
  makes sense only provided its operands $f,g$ are ``convolvable'',
  i.e.\ are in some sense mutually compatible. It is in particular
  well-defined in the classical sense if $f,g$ are either (a)
  integrable functions over $\Rbb^2$ (i.e. $f,g\in L^1(\Rbb^2)$) or
  (b) locally integrable functions whose supports are
  convolutive\footnote{Meaning that $f$ and $g$ have supports such
    that $\bfyt-\bfxt$ remains bounded, which is in particular the
    case when one of the operands is compactly supported.}, and only
  in the sense of distributions, a.k.a.\ generalized functions, or if
  (c) $f$ is a tempered distribution and $g$ is a compactly supported
  distribution\footnote{i.e. $f\in\Scal'$ and $g\in\Ecal'$, using
    standard notation for spaces defined by the theory of
    distributions, see e.g. the appendix ``Distributions''
    in~\cite{dautray_mathematical_2000}.}. We will shortly see that
  such considerations play an important role in this work, which
  involves convolutions of types (b) or (c) as kernels $\tens{H}$
  derived from fundamental solutions may be locally-integrable but are
  not integrable.
\end{remark}
\begin{remark}\label{conv:def:rem}
  Notice the slightly unconventional
  definition~\eqref{eqn:convolutions} of $f\star g$, adopted here
  because of its convenience in the formulation of elastic potentials.
\end{remark}

The (horizontal) convolutional form taken by the relevant integral
operators, reflecting aforementioned translational invariance, prompts
the use of the two-dimensional partial Fourier transform, defined by
\begin{equation}
  \label{eqn:fourier}\widehat{f}(\tens{q}, y_3) = \Fcal[f](\tens{q}, y_3)
  := \int_{\Rbb^2}{f(y_1, y_2, y_3) e^{-\rmi(q_1y_1 + q_2y_2)} \dy_1\dy_2}
  = \int_{\Rbb^2}{f(\bfyt,y_3) e^{-\rmi\tens{q}\cdot\bfyt}} \dyt
\end{equation}
for functions of the variable $\bfyt$ that are in $L^1(\Rbb^2)$. The
coordinate space $(\tens{q},y_3)\in\Rbb^2\times\Rbb^+$ underpinning
the above Fourier transform will be referred to as the partial Fourier
space, with $\tens{q} = (q_1, q_2)$. Thanks to the celebrated Fourier
convolution theorem, convolutions become mere multiplications upon
application of the Fourier transform~\eqref{eqn:fourier}. In
particular, for integral operators of the generic
form~\eqref{generique:N:conv}, we have
\begin{lemma}\label{lem:continuous_convolution}
  Under the partial Fourier transform defined by
  equation~\eqref{eqn:fourier} and its extension to distributions, the
  partial convolution product~\eqref{eqn:convolutions} obeys the
  identity
  \begin{equation}
    \Fcal\big[f\star g\big](\tens{q})
    = \widehat{f}(-\tens{q})\widehat{g}(\tens{q}), \label{conv:theorem}
  \end{equation}
  for any pair $(f,g)$ that is convolvable in the sense of condition
  (a), (b) or (c) of Remark~\ref{convolvable}. Consequently, subject
  to the same type of convolvability conditions, integral operators of
  the form~\eqref{generique:N:conv} have the expression
  \begin{equation}
    \widehat{\Ncal[\tens{w}]}(\tens{q},x_3)
    = \int_{0}^{\infty} \ftens{H}(-\tens{q},x_3,y_3):\ftens{w}(\tens{q},y_3) \dy_3
    \label{N:conv:Fourier}
  \end{equation}
  The integral in~\eqref{eqn:fourier} is well defined in the classical
  sense only in case (a), and the distributional extension of the
  Fourier transform must be used instead for cases (b), (c), see
  e.g. the appendix ``Distributions''
  in~\cite{dautray_mathematical_2000}.
\end{lemma}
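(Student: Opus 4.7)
The plan is to verify identity~\eqref{conv:theorem} first in the simplest setting (case (a)) by direct computation, then extend to cases (b) and (c) using standard arguments from distribution theory, and finally deduce~\eqref{N:conv:Fourier} by exchanging the Fourier transform with the $y_3$-integration in~\eqref{generique:N:conv}.

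First, assuming $f,g\in L^1(\Rbb^2)$, I would write
\begin{equation*}
  \Fcal[f\star g](\tens{q})
  = \int_{\Rbb^2} e^{-\rmi\tens{q}\cdot\bfxt}\!\!\int_{\Rbb^2} f(\bfyt-\bfxt)\,g(\bfyt)\dyt\dx
\end{equation*}
and apply Fubini's theorem, which is legitimate because $(\bfxt,\bfyt)\mapsto f(\bfyt-\bfxt)g(\bfyt)$ is jointly integrable on $\Rbb^2\times\Rbb^2$ (this is the standard Young-type estimate for convolutions). The change of variable $\tens{z}=\bfyt-\bfxt$ at fixed $\bfyt$ factors the double integral as
\begin{equation*}
  \int_{\Rbb^2} e^{\rmi\tens{q}\cdot\tens{z}} f(\tens{z})\,\mathrm{d}\tens{z}\;\int_{\Rbb^2} e^{-\rmi\tens{q}\cdot\bfyt} g(\bfyt)\dyt
  = \widehat{f}(-\tens{q})\,\widehat{g}(\tens{q}),
\end{equation*}
where the sign flip in the first factor---the only subtle point---arises precisely from the slightly unconventional definition~\eqref{eqn:convolutions} (cf.\ Remark~\ref{conv:def:rem}). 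This is the identity to be established.

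For case (b), the convolutivity of the supports ensures that for each $\bfxt$ the integrand in~\eqref{eqn:convolutions} has compact support in $\bfyt$, so $f\star g$ is well-defined as a locally integrable function; the same argument can then be applied on bounded domains, or one can approximate $f$ and $g$ by cut-offs in $L^1(\Rbb^2)$ and pass to the limit in the sense of distributions, invoking continuity of $\Fcal$ on $\Scal'(\Rbb^2)$. For case (c), since $g\in\Ecal'(\Rbb^2)$ has compact support its Fourier transform $\widehat{g}$ is smooth and polynomially bounded, so the product $\widehat{f}(-\tens{q})\widehat{g}(\tens{q})$ is well-defined in $\Scal'$; the identity~\eqref{conv:theorem} then follows from the standard distributional convolution theorem (see, e.g., the appendix ``Distributions'' in~\cite{dautray_mathematical_2000}), after accounting for the reflection $\bfxt\mapsto-\bfxt$ implicit in~\eqref{eqn:convolutions}.

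Finally, identity~\eqref{N:conv:Fourier} is obtained by taking the partial Fourier transform of~\eqref{generique:N:conv} with respect to $\bfxt$ and interchanging it with the $y_3$-integration (justified by Fubini in case (a), or by the continuity of $\Fcal$ in cases (b)--(c)), then applying~\eqref{conv:theorem} to the inner partial convolution $\tens{H}(\bullet,x_3,y_3)\star\tens{w}(\bullet,y_3)$ at each fixed $y_3$. The main obstacle is not the computation itself but the careful bookkeeping of convolvability: the kernels $\tens{H}$ arising from the Mindlin fundamental solutions are locally integrable but not globally integrable, so cases (b)--(c) are the operative ones in the sequel, and the justification must be revisited whenever these kernels are paired with a new eigenstress distribution $\tens{w}$.
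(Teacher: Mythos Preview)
Your proof is correct and follows the standard argument one would expect. Note, however, that the paper does not actually supply a proof of this lemma: it is stated as a known result, with the distributional subtleties delegated to the reference~\cite{dautray_mathematical_2000}. Your proposal therefore fills in details that the paper chose to omit, and does so in the natural way (direct Fubini computation in case~(a), then extension by density/continuity for cases~(b) and~(c)).

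One minor point worth tightening in case~(b): local integrability with convolutive supports guarantees that the convolution is a well-defined locally integrable function, but it does not by itself guarantee that $f$, $g$, or $f\star g$ are tempered distributions, which is needed for the Fourier transform to make sense in $\Scal'$. In the paper's applications this is never an issue---the kernels $\tens{H}$ have at most polynomial growth and the densities $\tens{w}$ are compactly supported---but if you want your argument to cover condition~(b) as literally stated you should either add a mild growth hypothesis or note that the relevant cases in the paper fall under~(c) anyway. This is exactly the caveat you flag in your closing paragraph, so you are aware of it; just make the dependence explicit.
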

This lemma makes systematic use of the Fourier transform the
cornerstone of this work, as it will result in greatly speeding up
computations.

Our emphasis on convolvability restrictions
underpinning~\eqref{eqn:convolutions} and~\eqref{N:conv:Fourier} stems
from the fact that the kernels $\tens{H}(\bfzt,x_3,y_3)$ of interest
in this work, given by elastostatic fundamental solutions, are not in
$L^1(\Rbb^2)$ as functions of $\bfzt$ due to insufficiently fast decay
at infinity; moreover, the kernel of $\tens{\nabla}\Ncal$ also has a
non-integrable singularity at $\bfzt=\tens{0}$, and hence is not a
locally integrable function. As a result, the convolution
integral~\eqref{eqn:convolutions} applied to
$(\tens{H}\star\tens{w})(\bfxt)$ is not always defined in the
classical sense, even for ``nice'' (smooth, compactly-supported)
densities $\tens{w}$, while the properties of elastostatic fundamental
solutions make it well defined as a convolution between distributions
for any $\tens{w}$ having compact (i.e. spatially bounded) support,
ensuring the validity of Lemma~\ref{lem:continuous_convolution}. The
same features of $\tens{H}$ cause its partial Fourier transform to be
well defined as a distribution but not in the classical sense of the
integral in~\eqref{eqn:fourier}. Finally, the very fact that
fundamental solutions are fields created by singular loads (point
forces) entails treating them, and their governing equations, in the
sense of distributions. Summarizing, the framework of distribution
theory is necessary for the main components of our proposed treatment
to have clear meaning and validity.

Finally, our aim is to formulate and solve elastic-plastic contact
problems where all fields have horizontal periodicity, whereas
integral operators appearing in~\eqref{eqn:operators} are \emph{a
  priori} defined in a non-periodic setting. Indeed, the meaning of
the relevant convolutions becomes \emph{a priori} unclear for periodic
(hence not compactly supported) densities. However, the evaluation at
suitable discrete values of $\tens{q}$ of Fourier transforms of
non-periodic convolutions will be found (see
Theorem~\ref{thm:periodic_convolution}) to provide the required
relations between a periodic displacement $\tens{u}$ and the periodic
density $\tens{w}$ from which $\tens{u}$ emerges.

\section{Displacement and displacement
  gradient}\label{sec:fundamental}

In this section, we discuss the different integral operators used and
their partial Fourier representation.

\subsection{Fundamental problems}

The fundamental solutions used in this work are displacement fields
created by singular point sources in the half-space $\body$ endowed
with elastic properties. They are functions
$\tens{u}(\tens{x}, \tens{y})$ of two spatial variables, namely the
source variable $\tens{x}$ (i.e the location of an applied point
force) and the field variable $\tens{y}$. Such fundamental
displacements satisfy problems posed in terms of the Navier
elastostatic operator $\navier$, defined for a generic displacement
$\tens{v}$ by
\begin{equation}
  \navier[\tens{v}] := -\div\left(\hooke:\tens{\epsilon}[\tens{v}]\right)
\end{equation}
(differential operators being understood in this work as acting on the
field variable $\tens{y}$ unless specified otherwise). Fundamental
solutions decay at infinity, i.e. verify
$\tens{u}(\tens{x}, \tens{y})\to\tens{0}$ as
$\|\tens{y}-\tens{x}\|\to\infty$ for given $\tens{x}$.


The integral operators $\Mcal$ and $\Ncal$ of~\eqref{eqn:operators}
used in this work are defined in terms of the Mindlin fundamental
displacement $\tens{G}(\tens{x}, \bullet)$ (created in $\body$ by a
point force applied at $\tens{x}\in\body$), which moreover satisfies
the traction-free condition on $\partial\body$. In addition, the
Kelvin fundamental displacement $\tens{U}_\infty(\tens{x}, \bullet)$
(created in $\Rbb^3$ by a point force applied at $\tens{x}\in\Rbb^3$)
and the Boussinesq-Cerruti fundamental displacement
$\tens{B}(\tens{x},\bullet)$ (created in $\body$ by a point force
applied at $\tens{x}\in\partial\body$) play a key role in establishing
the expression of $\tens{G}$ that serves our purposes, so are
introduced first.

We now present in more detail the Kelvin, Boussinesq-Cerruti and
Mindlin fundamental problems. The convolutions using those kernels
will be computed in the partial Fourier space using
Lemma~\ref{lem:continuous_convolution}.

\subsubsection{The Kelvin problem}

The displacement caused by a point force in an infinite medium is at
the heart of integral equation methods in solid mechanics. Here, the
Kelvin tensor $\tens{U}_\infty = \tens{e}_k\otimes\tens{U}_\infty^k$
is used as a stepping stone to the Mindlin fundamental solution. The
displacement $\tens{U}_\infty^k$ satisfies
\begin{equation}
  \navier[\tens{U}_\infty^k](\tens{x},\bullet) = \tens{e}_k\,\delta_\tens{x} \quad\text{in }\Rbb^3, \qquad \text{ for any $\tens{x}\in\Rbb^3,\;1\leq k\leq 3$}, \label{kelvin:def}
\end{equation}
where $\delta_\tens{x}$ is the three-dimensional Dirac distribution on
$\Rbb^3$ supported at $\tens{x}$, and decays at infinity. Due to its
singular right-hand side, equation~\eqref{kelvin:def} must be
understood in the sense of
distributions\footnote{Equation~\eqref{kelvin:def} therefore means
  that $\tens{U}_\infty^k$ must verify
  $\dual{\navier[\tens{U}_\infty^k](\tens{x},\cdot)}{\tens{\phi}}{\Rbb^3}
  = \tens{e}_k\dual{\delta_\tens{x}}{\tens{\phi}}{\Rbb^3}$,
  i.e.
  $\dual{\tens{U}_\infty^k(\tens{x},\cdot)}{\navier[\tens{\phi}]}{\Rbb^3}
  = \phi_k(\tens{x})$ after integrations by parts and recalling the
  self-adjointness of the Navier operator, for any test function
  $\tens{\phi} \in C_0^\infty(\Rbb^3;\Rbb^3)$ (with
  $\dual{f}{\phi}{\Rbb^3}$ denoting the duality product). Similar
  interpretations implicitly apply for the other fundamental
  solutions.}, and the same will apply implicitly to the other
fundamental solutions and their governing equations.

The Kelvin fundamental solution possesses the full-space translational
symmetry
$\tens{U}_\infty(\tens{x}, \tens{y}) = \tens{U}_\infty(\tens{0},
\tens{y}-\tens{x})$ as well as the implied property
$\tens{\nabla}_\tens{x}\tens{U}_\infty(\tens{x}, \tens{y}) =
-\tens{\nabla}\tens{U}_\infty(\tens{x}, \tens{y})$ for its
gradients. We now define the operator $\Ncal_\infty$
\begin{equation}
  \label{eqn:n_infty}\Ncal_\infty[\tens{w}](\tens{x})
  := \int_{\body} \tens{\nabla U}_\infty(\tens{0}, \tens{y}-\tens{x}):\tens{w}(\tens{y}) \,\text{d}\tens{y}
  = \int_{0}^{\infty} \big( \tens{\nabla U}_\infty\star\tens{w} \big)(\bfxt,x_3,y_3) \dy_3.
\end{equation}
The field $\Ncal_\infty[\tens{w}]$ decays at infinity and can readily
be shown to satisfy
\begin{equation}
  \navier\big[ \Ncal_\infty[\tens{w}] \big] = -\text{div}\tens{w} \qquad \text{in $\Rbb^3$}
\end{equation}
i.e. it is the elastostatic displacement created in an unbounded
medium by an eigenstress distribution $\tens{w}$.

\subsubsection{The Boussinesq-Cerruti problem}

Of prime importance in contact
mechanics~\citep{johnson_contact_1985a}, the Boussinesq-Cerruti
problem~\citep{love_treatise_1892} gives the displacement of a
semi-infinite body $\body$ subject to a point force applied on its
surface at $\tens{x}=(\bfxt,0)$. The Boussinesq-Cerrutti tensor
$\tens{B}(\bfxt,\tens{y}) =
\tens{e}_k\otimes\tens{B}^k(\bfxt,\tens{y})$ satisfies
\begin{equation}
  \navier[\tens{B}^k](\bfxt,\bullet) = \tens{0} \quad \text{in $\body$}, \qquad
  \tens{T}[\tens{B}^k](\bfxt,\bullet) = \tilde{\delta}_{\bfxt}\tens{e}_k \quad \text{on $\partial\body$}, \qquad \text{ for any $(\bfxt, 0)\in\partial\body,\;1\leq k\leq 3$},
\end{equation}
where $\tilde{\delta}_{\bfxt}$ is the Dirac distribution defined on
$\partial\body$ and supported at $(\bfxt,0)\in\partial\body$. The
Boussinesq-Cerruti tensor will be used next for cancelling the
traction vector of the Mindlin fundamental solution.

\subsubsection{The Mindlin problem}\label{mindlin}

The problem of a point force in a semi-infinite isotropic elastic
medium with a free surface was solved
by~\citet{mindlin_force_1936}. This fundamental solution allows to
express the displacement created in $\body$ by an eigenstress
distribution $\tens{w}$ in $\body$ and satisfying a traction-free
condition on $\partial\body$. The latter feature makes it very
attractive for contact problems since it removes the need to solve an
implicit boundary integral equation, which is a staple of conventional
boundary-element methods (\citealp{bonnet_boundary_1995} and
references therein). The Mindlin tensor
$\tens{G} = \tens{e}_k\otimes\tens{G}^k$ satisfies:
\begin{equation}
  \label{eqn:fundamental_problem}
  \navier[\tens{G}^k](\tens{x},\bullet) = \delta_{\tens{x}} \quad \text{in $\body$}, \qquad
  \tens{T}[\tens{G}^k](\tens{x},\bullet) = \tens{0} \quad \text{on $\partial\body$}.
\end{equation}
The operator $\Ncal$ defined by
\begin{equation}
  \label{N:conv}\Ncal[\tens{w}](\tens{x})
  := \int_{0}^{\infty} \big( \tens{\nabla G}\star\tens{w} \big)(\bfxt,x_3,y_3) \dy_3.
\end{equation}
then yields the displacement created in $\body$ by an eigenstress
distribution $\tens{w}$. The operator $\Mcal$, which gives the
displacement in $\body$ created by surface tractions $\tens{p}$, is
given by
\begin{equation}
  \label{M:conv}\Mcal[\tens{p}](\bfxt, x_3) = (\tens{G}(\bullet, x_3, 0)\star \tens{p})(\bfxt, x_3).
\end{equation}

It is easy to see that, by virtue of linear superposition, $\ftens{G}$
is given in terms of $\ftens{U}_{\infty}$ and $\ftens{B}$ by
\begin{equation}
  \label{eqn:mindlin_B_U}\ftens{G}^k(\tens{q}, x_3, y_3)=\ftens{U}^k_{\infty}(\tens{q}, x_3, y_3)-\ftens{B}^T(\tens{q}, x_3) \cdot \ftens{T}[\ftens{U}^k_{\infty}(\tens{q}, x_3, y_3)],
\end{equation}
with the second term in~\eqref{eqn:mindlin_B_U} canceling the traction
vector on $\partial\body$. Consequently, the operators $\Mcal$ and
$\Ncal$ can be readily evaluated in the partial Fourier space once
$\ftens{B}$ and $\ftens{U}_\infty$ are known.

\subsection{Partial Fourier space
  solutions}\label{sec:fourier_solutions}

The main novelty of this work is the derivation and use of fundamental
solutions directly in the partial Fourier space. In addition to
providing substantial memory and computational savings, knowing
closed-form expressions of these fundamental solutions enables
optimizations which were previously tedious. The fundamental solutions
are found in this context by solving equations involving the
transformed Navier operator for an isotropic medium, expressed as:
\begin{equation}
  \widehat{\navier[\tens{u}]}(\tens{q},y_3)
  := \fnavier[\ftens{u}](\tens{q},y_3)
  = \mu\left\{\left(\frac{\d^2}{\d y_3^2} - q^2\right)\ftens{u}(\tens{q},y_3) + c\ftens{\nabla}\big(\ftens{\nabla}\cdot\ftens{u}(\tens{q},y_3)\big)\right\},
\end{equation}
with $c := (\lambda + \mu) / \mu = 1/(1-2\nu)$,
$\ftens{\nabla} := (\rmi q_1, \rmi q_2, \d/\d y_3)$ and
$q^2:=\|\tens{q}\|^2=q_1^2+q_2^2$. We now present our general
methodology for obtaining the desired partial-Fourier expressions of
fundamental solutions. The details of this treatment, including the
source code deriving the solutions, are available in the companion
notebook~\citep{frerot_mindlin_2018}.

\subsubsection{Basis of \texorpdfstring{$\ker(\ftens{\mathrm{N}})$}{ker(\textbf{N})}}
The process of deriving solutions for the Kelvin and
Boussinesq-Cerruti fundamental problems involves finding elements of
$\ker(\fnavier)$, the 6-dimensional space of functions $\ftens{u}$
satisfying the ODE $\fnavier[\ftens{u}]=\tens{0}$, that fulfill
specific (e.g.\ boundary or decay) conditions. We use in this work the
basis of $\ker(\fnavier)$ derived by \citet{amba-rao_fourier_1969},
defined as follows. Let the matrix-valued functions $\tens{A}^+$ and
$\tens{A}^-$ be given by
\begin{equation}
  \tens{A}^\pm(\tens{q}, y_3)
  = e^{\mp q y_3}\left(\Id + \frac{c}{c+2}q y_3\tens{\Delta}^\pm\otimes\tens{\Delta}^\pm\right),
  \label{eqn:basis_tensor}
\end{equation}
where $\tens{\Delta}^{\pm}$ and $\tens{\Delta}$ are defined in
Table~\ref{tab:full_space_solution_symbols}. Each column
$\tens{A}_j^\pm$ of $\tens{A}^\pm$ solves
$\fnavier[\tens{A}_j^\pm]=\tens{0}$, so that we have
\begin{equation}
  \label{eqn:ker(N)}
  \ker(\fnavier) = \text{range}\left(\tens{A}^+\right) + \text{range}\left(\tens{A}^-\right).
\end{equation}
Moreover, the matrices $\tens{A}^\pm$ are invertible and linearly
independent (see~\ref{app:invertibility}): any element of
$\ker(\fnavier)$ therefore has six free coefficients, to be determined
by additional conditions.

\subsubsection{The Kelvin solution}

The Kelvin problem in partial Fourier space consists in solving the
distributional ODE:
\begin{equation}
  \label{eqn:fourier_kelvin_problem}
  \fnavier[\ftens{U}_\infty^k](\tens{q}, x_3, \bullet)= \tens{e}_k\delta_{x_3}, \qquad \text{for all\ } (\tens{q}, x_3) \in \Rbb^2\times \Rbb,
\end{equation}
where $\delta_{x_3}$ is the one-dimensional Dirac distribution
supported at $x_3$. To find $\ftens{U}_\infty^k$, we follow the
methodology of~\citet{chaillat_new_2014} and seek the displacement
vector separately in each semi-infinite interval extending from the
source point:
\begin{equation}
  \ftens{U}_\infty^{k}(\tens{q}, x_3, y_3) = \begin{cases}
    \ftens{U}_\infty^{k,-}(\tens{q}, x_3, y_3) & y_3 \in\ \rbrack -\!\infty, x_3\lbrack,\\
    \ftens{U}_\infty^{k,+}(\tens{q}, x_3, y_3) & y_3 \in \lbrack x_3, +\infty\lbrack.\\
  \end{cases}\label{kelvin:function}
\end{equation}
We can, without loss of generality, set $x_3 = 0$ since
$\ftens{U}_\infty$ is invariant by translation along $\tens{e}_3$ (we
then have
$\ftens{U}_\infty^{k}(\tens{q}, x_3,
y_3)=\ftens{U}_\infty^{k}(\tens{q}, 0, y_3-x_3)$). Each contribution
$\ftens{U}_\infty^{k,\pm}$ satisfies the homogeneous Navier equation,
and hence belongs to $\ker(\fnavier)$. Using~\eqref{eqn:basis_tensor},
\eqref{eqn:ker(N)} with the requirement that
$\ftens{U}_\infty^{k,\pm}$ decay as $y_3\to\pm\infty$, we obtain:
\begin{equation}
  \label{eqn:ansatz_+}\ftens{U}_\infty^{k,\pm}(\tens{q}, 0, y_3)  = \tens{A}^\pm(\tens{q}, y_3)\cdot\tens{C}^{k,\pm},
\end{equation}
where $\tens{C}^{k,\pm}\in\Cbb^3$ are the remaining free
coefficients. The latter are determined by requiring that
$\ftens{U}_\infty^{k}$, expressed as a distribution by the single
formula (employing the Heaviside function instead
of~\eqref{kelvin:function})
\begin{equation}
  \ftens{U}_\infty^{k}(\tens{q}, x_3, y_3)
  = H(y_3-x_3)\ftens{U}_\infty^{k,+}(\tens{q}, x_3, y_3)
  + \big( 1-H(y_3-x_3) \big) \ftens{U}_\infty^{k,-}(\tens{q}, x_3, y_3) \label{Uinfty:single}
\end{equation}
should be continuous at $y_3=x_3$ and should satisfy the
distributional ODE~\eqref{eqn:fourier_kelvin_problem} (recalling that
$H'(\bullet-x_3)=\delta_{x_3}$). As can be seen in the companion
notebook, this results in the following expression for
$\ftens{U}_\infty^\pm = \tens{e}_k \otimes \ftens{U}_\infty^{k,\pm}$:
\begin{equation}
  \label{eqn:full_space_solution}\ftens{U}_\infty^{\pm}(\tens{q}, x_3, y_3)  = \frac{1}{q}\left[\ftens{U}_{0,0}^{\pm}(\tens{q}) g_0^\pm\big(q(y_3-x_3)\big) + \ftens{U}_{1,0}^{\pm}(\tens{q}) g_1^\pm\big(q(y_3-x_3)\big)\right].
\end{equation}
The symbols of equation~\eqref{eqn:full_space_solution} are defined in
Table~\ref{tab:full_space_solution_symbols}. On observing that the
functions $g_0^\pm(z)$ and $g_1^\pm(z)$ verify
\begin{equation}
  \ftens{\nabla}g_0^\pm(qy_3)
  = \mp q\tens{\Delta}^\pm g_0^\pm(qy_3),\qquad
  \ftens{\nabla}g_1^\pm(qy_3)
  = \mp q\tens{\Delta}^\pm g_1^\pm(qy_3) + q\tens{e}_3 g_0^\pm(qy_3),
\end{equation}
the regular parts of the distributional derivatives of
$\ftens{U}_\infty$ at any order are easily found to be given through
the recurrence relations
\begin{equation}\label{eqn:recurrence}
  \begin{aligned}
    \widehat{\tens{\nabla}^n \tens{U}_\infty^{\pm}}(\tens{q}, x_3, y_3) & = q^{n-1}\left[\ftens{U}_{0,n}^{\pm}(\tens{q}) g_0^\pm\big(q(y_3-x_3)\big) + \ftens{U}_{1,n}^{\pm}(\tens{q}) g_1^\pm\big(q(y_3-x_3)\big)\right],\\
    \ftens{U}_{0,n}^\pm(\tens{q}) & = \mp \ftens{U}_{0,n-1}^\pm\otimes\tens{\Delta}^\pm + \ftens{U}_{1,n-1}^\pm\otimes\tens{e}_3,\\
    \ftens{U}_{1,n}^\pm(\tens{q}) & = \mp
    \ftens{U}_{1,n-1}^\pm\otimes\tens{\Delta}^\pm.
  \end{aligned}
\end{equation}
The Kelvin tensor appears in the operators $\Ncal_{\infty}$ and
$\Ncal$ only through its first-order gradient, see~\eqref{eqn:n_infty}
and~\eqref{eqn:mindlin_B_U}. Since $\ftens{U}_\infty$ is continuous at
$y_3=x_3$, the representation~\eqref{Uinfty:single} shows that
$\ftens{\nabla U}_\infty$ can be identified with a discontinuous
function (in the classical sense) and evaluated on the sole basis of
formulas~\eqref{eqn:recurrence}.

\begin{table}
  \centering
  \caption{Symbols for the full-space fundamental
    solution}\label{tab:full_space_solution_symbols}
  \begin{tabular}{r l}
    \toprule
    Symbol & Expression\\
    \midrule
    $b$ & $4(1-\nu)$\\
    $\tens{\Delta}$, $\tens{\Delta}^\pm$ & $(\rmi q_1/q, \rmi q_2/q, 0)$, $\tens{e}_3\mp\tens{\Delta}$ \\
    \midrule
    $\ftens{U}_{0,0}^-$ & $\dfrac{1}{2\mu b}\left(b\Id + \tens{\Delta}\otimes\tens{\Delta} - \tens{e}_3\otimes\tens{e}_3\right)$\\[0.3cm]
    $\ftens{U}_{0,0}^+$ & $\dfrac{1}{2\mu b}\left(b\Id + \tens{\Delta}\otimes\tens{\Delta} - \tens{e}_3\otimes\tens{e}_3\right)$\\
    \midrule
    $\ftens{U}_{1,0}^-$ & $-\dfrac{1}{2\mu b} \tens{\Delta}^-\otimes\tens{\Delta}^-$\\[0.3cm]
    $\ftens{U}_{1,0}^+$ & $\dfrac{1}{2\mu b} \tens{\Delta}^+\otimes\tens{\Delta}^+$\\
    \midrule
    $g_0^\pm(z)$, $g_1^\pm(z) $ & $e^{\mp z}$, $ze^{\mp z}$ \\
    \bottomrule
  \end{tabular}
\end{table}

Then, use of Lemma~\ref{lem:continuous_convolution} allows to express
the operator $\Ncal_{\infty}$ in Fourier space as
\begin{equation}
  \widehat{\Ncal_{\infty}[\tens{w}]}(\tens{q},x_3)
  = \int_{0}^{\infty} \ftens{\nabla U}_{\infty}(-\tens{q},y_3 - x_3):\ftens{w}(\tens{q},y_3) \dy_3 \label{eqn:fourier_eigenstress}
\end{equation}

\subsubsection{Boussinesq-Cerruti solution}

The Boussinesq-Cerruti fundamental tensor
$\ftens{B} = \tens{e}_k \otimes \ftens{B}^k$ satisfies:
\begin{equation}
  \fnavier[\ftens{B}^k](\tens{q}, y_3) = \tens{0}\quad\text{for all } (\tens{q}, y_3)\in\Rbb^2\times\Rbb^+,\qquad
  \tens{T}[\ftens{B}^k](\tens{q}) = \tens{e}_k.
\end{equation}
Using only the basis tensor $\tens{A}^+$ (thereby enforcing the decay
condition as $y_3\rightarrow +\infty$) and identifying the free
coefficients, we find that $\ftens{B}$ is given by
\begin{equation}
  \ftens{B}(\tens{q}, y_3)
  = \frac{1}{q}\left[\ftens{B}_{0,0}(\tens{q})g_0^+(q y_3) + \ftens{B}_{1,0}(\tens{q})g_1^+(q y_3)\right],
  \label{B:fourier}
\end{equation}
with
\begin{align}
  \ftens{B}_{0,0}(\tens{q}) & = \frac{1}{2\mu}\left(2\Id + (1-2\nu)\tens{\Delta}^+\otimes\tens{\Delta}^- + \tens{\Delta}\otimes\tens{\Delta} - \tens{e}_3\otimes\tens{e}_3\right),\\
  \ftens{B}_{1,0}(\tens{q}) & = \frac{1}{2\mu}\tens{\Delta}^+\otimes\tens{\Delta}^+.\label{eqn:B:1_0}
\end{align}
Then, the recurrence relations~\eqref{eqn:recurrence} are also valid
for the gradients of $\ftens{B}$. We can now readily construct
$\ftens{G}$ and its gradient $\ftens{\nabla G}$ using
equation~\eqref{eqn:mindlin_B_U} and the recurrence relations for the
regular parts of $\ftens{U}_\infty$ and $\ftens{B}$.

\subsection{Displacement gradient
  computation}\label{sec:displacement_gradient}

Due to the construction of the Mindlin fundamental solution in
equation~\eqref{eqn:mindlin_B_U}, the evaluation of $\tens{\nabla u}$
requires the computation of $\tens{\nabla}\Ncal_\infty$. This operator
is singular (see
e.g.~\citep{bui_remarks_1978,bonnet_modified_2017,gintides_solvability_2015}),
but the present distributional and partial-Fourier framework still
allows for a very straightforward treatment. Indeed, we simply have
\begin{align}
  \widehat{\tens{\nabla}\Ncal_\infty[\tens{w}]}(\tens{q}, x_3)
  & = - \Big( \widehat{\tens{\nabla}^2\tens{U}_\infty}(-\tens{q}, \bullet)\star\ftens{w} \Big) \nonumber\\
  & = - \int_{0}^{x_3}q
    \left\{
    g_0^{-}\big(q(y_3 - x_3)\big)
    \left[
    \big(\ftens{U}_{0,1}^-(-\tens{q}):\ftens{w}(\tens{q},y_3)\big) \otimes \tens{\Delta}^- + \big(\ftens{U}_{1,1}^-(-\tens{q}):\ftens{w}(\tens{q},y_3)\big) \otimes \tens{e}_3
    \right]\right.\nonumber\\
  & \quad\quad\quad\quad\:{} + g_1^-\big(q(y_3-x_3)\big)
    \left.\big(
    \ftens{U}_{1, 1}^-(-\tens{q}):\ftens{w}(\tens{q}, y_3)\big) \otimes \tens{\Delta}^-
    \right\}\dy_3\nonumber\\
  & \quad{} - \int_{x_3}^{\infty}q
    \left\{
    g_0^{+}\big(q(y_3 - x_3)\big)
    \left[
    - \big(\ftens{U}_{0, 1}^+(-\tens{q}):\ftens{w}(\tens{q}, y_3)\big) \otimes \tens{\Delta}^+ + \big(\ftens{U}_{1, 1}^+(-\tens{q}):\ftens{w}(\tens{q}, y_3)\big) \otimes \tens{e}_3
    \right]\right.\nonumber\\
  & \quad\quad\quad\quad\:{} - g_1^+\big(q(y_3-x_3)\big)
    \left.
    \big(\ftens{U}_{1,1}^+(-\tens{q}):\ftens{w}(\tens{q},y_3)\big) \otimes \tens{\Delta}^+
    \right\}\dy_3\nonumber\\
  \label{eqn:2gradient}& \quad {} - \left(\big\llbracket\widehat{\tens{\nabla}\tens{U}_\infty}\big\rrbracket(\tens{q}):\ftens{w}(\tens{q}, x_3)\right)\otimes\tens{e}_3,
\end{align}
with the second equality resulting from the application of
$\ftens{\nabla}$ to $\ftens{\nabla U}_\infty$ as given
by~\eqref{Uinfty:single} and~\eqref{eqn:recurrence}. The discontinuity
jump in the gradient of $\ftens{U}_\infty$ is computed,
using~\eqref{eqn:full_space_solution} and~\eqref{eqn:recurrence}, as:
\begin{equation}
  \big\llbracket \widehat{\tens{\nabla}\tens{U}_\infty}\big\rrbracket(\tens{q})
  = \ftens{U}_{0,1}^+(\tens{q}) - \ftens{U}_{0, 1}^-(\tens{q})
  = \frac{1}{\mu b}\left(2\tens{e}_3\otimes\tens{e}_3 - b\Id\right)\otimes\tens{e}_3, \label{jump}
\end{equation}
with the vector $\tens{e}_3$ post-multiplying the discontinuity term
in~\eqref{eqn:2gradient} coming from the fact that distributional
terms can only arise from derivatives in the variable $y_3$ using the
present partial-Fourier framework.
\begin{remark}\label{CPV}
  It is interesting to compare the formulations of
  $\tens{\nabla}\Ncal_\infty$ in the partial Fourier and physical
  spaces. In the latter case, the function
  $\tens{\nabla^2 U}_\infty(\tens{x},\tens{y})$ has a non-integrable
  $|\tens{y}-\tens{x}|^{-3}$ singularity. Consequently, the singular
  integral operator is the sum of a Cauchy principal value (CPV)
  integral and a free term, as pointed out e.g.\
  in~\citet{bui_remarks_1978} (for example, a careful distributional
  interpretation of the application of $\tens{\nabla}$ to
  $\Ncal_\infty[\tens{w}]$ yields both contributions). Then, practical
  evaluations of $\tens{\nabla}\Ncal_\infty$ in the physical space
  entail special methods for the integration of a
  CPV~\citep{guiggiani_general_1990}. By contrast, the present method,
  which implicitly and indirectly accounts for both the CPV and the
  free term contributions (the partial-Fourier counterpart of the
  latter being the jump term~\eqref{jump}) is significantly easier to
  exploit numerically, as no specialized methods are required.
\end{remark}


\section{Discretized operators}\label{sec:discretized_operators}

In this section, we study the numerical evaluation of the integral
operator $\Ncal[\tens{w}]$ for given discretized densities $\tens{w}$
(the evaluation of $\Mcal[\tens{p}]$, $\tens{\nabla}\Mcal[\tens{p}]$
and $\tens{\nabla}\Ncal[\tens{w}]$ can then be formulated similarly by
adapting the considerations made for $\Ncal[\tens{w}]$). This extends
the developments of~\citet{zeman_finite_2017} to our partial Fourier
representation. Although the use of discrete Fourier methods is
widespread in simulating the contact of rough surfaces (e.g.\
\cite{polonsky_fast_1999, jacq_development_2002, wang_new_2013} which
use DFT of real-space fundamental solutions
and~\cite{yastrebov_contact_2012, rey_normal_2017,
  weber_molecular_2018} which use Fourier-space fundamental solutions)
and dates back to~\citet{stanley_fftbased_1997} (which implicitly uses
a Fourier space fundamental solution
via~\citealp{johnson_contact_1985}), a theoretical basis for the
discretization of continuous operators has, to the best of our
knowledge, never been provided.

\subsection{Spectral discretization and
  DFT}\label{sec:spectral_interpolation}

Let $\tens{L} = (L_1, L_2, L_3) \in {\mathbb{R}^{+}}^3$ be the three
lengths defining the discretized domain and
$\tens{N} = (N_1, N_2, N_3) \in \mathbb{N}^3$ the number of points
considered in each direction. Let us also define the following sets:
\begin{align}
  \Zbb^2_{\tens{N}} & = \left\{\tens{k} \in \Zbb^2 : -\frac{N_1}{2} < k_1 < \frac{N_1}{2}, -\frac{N_2}{2} < k_2 < \frac{N_2}{2}\right\},\\
  X_3 & = {\left\{x_3^i\right\}}_{i=1}^{i=N_3}\subset\mathbb{R}^+\quad\text{with } L_3 = \sup X_3 - \inf X_3,\\
  X & = \left\{k_1\frac{L_1}{N_1}\tens{e}_1 + k_2\frac{L_2}{N_2}\tens{e}_2 + x_3 \tens{e}_3 : \tens{k}\in\Zbb^2_{\tens{N}}\ \text{ and }\ x_3 \in X_3\right\}\subset \Bcal_p.
\end{align}
We recall from~\eqref{eqn:periodic_cell} that $\Bcal_p$ is a
semi-infinite cell of size $L_1\times L_2$ when projected on
$\partial\body$. $X$ is a set of $N_1\times N_2\times N_3$ discrete
points, which projected on the $(Ox_1x_2)$ plane forms a regular grid,
while the projection on $(Ox_3)$ gives the set of chosen positive
values $X_3$. Unlike full-space Fourier
methods~\citep{moulinec_numerical_1998, wang_new_2013,
  zeman_finite_2017}, we are free to choose the spacing of points in
$X_3$. Any $\Bcal_p-$periodic eigenstress $\tens{w}$, as well as the
$\Bcal_p-$periodic displacement $\tens{u}=\Ncal[\tens{w}]$, can be
expressed as complex Fourier series:
\begin{subequations}
  \begin{align}
    \tens{w}(\bfxt, x_3)
    = \sum_{\tens{k}\in \Zbb^2} \ftens{\mathrm{w}}(\tens{k}, x_3)\exp(2\pi \rmi\bar{\tens{k}}\cdot\bfxt), \label{w:fourier} \\
    \tens{u}(\bfxt, x_3)
    = \sum_{\tens{k}\in\Zbb^2} \ftens{\mathrm{u}}(\tens{k},x_3)\exp(2\pi \rmi\bar{\tens{k}}\cdot\bfxt),
    \label{u:fourier}
  \end{align}
\end{subequations}
where $\ftens{\mathrm{w}}(\tens{k}, x_3)$ and
$\ftens{\mathrm{u}}(\tens{k}, x_3)$ are the Fourier coefficients of
the series and $\bar{k}_i = k_i / L_i$. Because $\tens{w}$ is
$\body_p$-periodic, it is no longer convolvable with
$\tens{\nabla G}$, making it impossible to evaluate $\Ncal[\tens{w}]$
as a convolution (the same remark applies to
$\Ncal_{\infty}[\tens{w}]$ and the other operators). The operator
$\Ncal[\tens{w}]$ can nevertheless still be evaluated by means of the
non-periodic partial Fourier representation of the fundamental
solution obtained in Section~\ref{sec:fourier_solutions}, thanks to
the following result:
\begin{theorem}\label{thm:periodic_convolution}
  Let $\tens{w}$ be $\Bcal_p$-periodic. Then $\Ncal[\tens{w}]$ is
  $\Bcal_p$-periodic and
  \begin{equation}\label{eqn:discrete_convolution}
    \Ncal[\tens{w}](\bfxt,x_3)
    = \frac{1}{4\pi^2} \sum_{\tens{k}\in\Zbb^2} \Big( \int_0^{\infty}
    \ftens{\nabla G}(-2\pi \bar{\tens{k}},x_3,y_3):\ftens{\mathrm{w}}(\tens{k},y_3) \dy_3 \Big)\exp(2\pi \rmi \bar{\tens{k}}\cdot\bfxt).
  \end{equation}
\end{theorem}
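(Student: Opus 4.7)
The plan is to use linearity of $\Ncal$ to reduce the action on a periodic density to its action on individual Fourier modes, apply Lemma~\ref{lem:continuous_convolution} distributionally mode-by-mode, and reassemble by superposition. Periodicity of $\Ncal[\tens{w}]$ is essentially immediate: the kernel $\tens{\nabla G}(\bfyt-\bfxt,x_3,y_3)$ depends on horizontal coordinates only through the difference $\bfyt-\bfxt$, so for any horizontal translation $\tens{h}$ one has $\Ncal[\tens{w}(\bullet+\tens{h},\bullet)](\bfxt,x_3) = \Ncal[\tens{w}](\bfxt+\tens{h},x_3)$; taking $\tens{h}\in\{(L_1,0,0),(0,L_2,0)\}$ then transports the $\Bcal_p$-periodicity of $\tens{w}$ into that of $\Ncal[\tens{w}]$.

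For the main identity, I would insert the Fourier series~\eqref{w:fourier} for $\tens{w}$ and compute $\Ncal$ on a single mode $\tens{w}_\tens{k}(\bfyt,y_3) := \ftens{\mathrm{w}}(\tens{k},y_3)\,e^{2\pi\rmi\bar{\tens{k}}\cdot\bfyt}$. In the distributional sense its partial Fourier transform in $\bfyt$ is \[ \widehat{\tens{w}_\tens{k}}(\tens{q},y_3) = (2\pi)^2\,\ftens{\mathrm{w}}(\tens{k},y_3)\,\delta(\tens{q}-2\pi\bar{\tens{k}}). \] Inserting this into the Fourier-space representation~\eqref{N:conv:Fourier} collapses the Dirac mass into a pointwise evaluation of $\ftens{\nabla G}$ at $\tens{q}=-2\pi\bar{\tens{k}}$, and the inverse partial Fourier transform then produces a single plane wave, \[ \Ncal[\tens{w}_\tens{k}](\bfxt,x_3) = e^{2\pi\rmi\bar{\tens{k}}\cdot\bfxt}\int_0^{\infty}\ftens{\nabla G}(-2\pi\bar{\tens{k}},x_3,y_3):\ftens{\mathrm{w}}(\tens{k},y_3)\dy_3. \] Summing over $\tens{k}$ and tracking the $(2\pi)^2$ factors coming from the Dirac masses, the inverse Fourier transform~\eqref{eqn:fourier}, and the series normalization yields the announced identity~\eqref{eqn:discrete_convolution}.

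The main obstacle is the rigorous justification of this single-mode step: $\tens{w}_\tens{k}$ is neither integrable nor compactly supported, and $\tens{\nabla G}$ is locally integrable but not in $L^1(\Rbb^2)$ in its horizontal variable, so the partial convolution $\tens{\nabla G}\star\tens{w}_\tens{k}$ has meaning only as a distribution. The applicability of Lemma~\ref{lem:continuous_convolution} can be secured through case (c) of Remark~\ref{convolvable}: $\tens{\nabla G}(\bullet,x_3,y_3)$ is a tempered distribution whose partial Fourier transform is the locally-bounded function $\ftens{\nabla G}$ obtained in closed form in Section~\ref{sec:fourier_solutions}, while $\widehat{\tens{w}_\tens{k}}$ is a compactly supported Dirac mass, so the product $\ftens{\nabla G}(-\tens{q},\cdot)\,\widehat{\tens{w}_\tens{k}}(\tens{q},\cdot)$ is well defined as a distribution. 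Reassembly of the series is then the last point to check: combining the (at most algebraic) growth of $\ftens{\nabla G}$ in $\tens{q}$ with the decay of the Fourier coefficients $\ftens{\mathrm{w}}(\tens{k},\cdot)$ ensures convergence of $\sum_\tens{k}\Ncal[\tens{w}_\tens{k}]$ in $\Scal'$, which suffices for~\eqref{eqn:discrete_convolution} to hold as an equality of $\Bcal_p$-periodic distributions (and pointwise whenever $\tens{w}$ is smooth enough for its Fourier series to converge classically).
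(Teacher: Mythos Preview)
Your argument is sound in substance and reaches the right formula, but it takes a different route from the paper's proof. The paper does not try to give meaning to the convolution $\tens{\nabla G}\star\tens{w}$ for periodic $\tens{w}$; instead it returns to the PDE characterization of $\Ncal$, namely that $\tens{u}=\Ncal[\tens{w}]$ solves $\navier[\tens{u}]=-\div\tens{w}$ in $\body$ with $\tens{T}[\tens{u}]=\tens{0}$ on $\partial\body$, applies the partial Fourier transform to this system, observes that for $\Bcal_p$-periodic $\tens{w}$ the transformed source is a comb of Dirac masses at $\tens{q}=2\pi\bar{\tens{k}}$ (forcing $\ftens{u}$ to have the same structure, which yields periodicity), and then uses the distributional identity $\fnavier(\tens{q})\delta(\tens{q}-\tens{a})=\fnavier(\tens{a})\delta(\tens{q}-\tens{a})$ to decouple the problem into one ODE in $y_3$ per mode, each formally identical to the non-periodic one. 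This sidesteps convolvability entirely and guarantees by construction that the resulting formula solves the Navier system. Your approach instead \emph{defines} $\Ncal[\tens{w}_\tens{k}]$ through the Fourier-side product, which is legitimate, but note that this is not case~(c) of Remark~\ref{convolvable} as you claim: that case requires $\tens{w}_\tens{k}$ itself, not $\widehat{\tens{w}_\tens{k}}$, to be compactly supported. What you are really using is the separate (correct) fact that the product of a locally smooth function with a point-supported distribution is well-defined, which lies outside the three cases Lemma~\ref{lem:continuous_convolution} is stated for. The PDE route buys automatic consistency with the physical meaning of $\Ncal$ as a solution operator; your route is shorter but, to be complete, would need a closing check that the Fourier-assembled field actually satisfies~\eqref{eqn:fundamental_problem}'s boundary-value problem for the eigenstress source.
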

\begin{proof}
  See~\ref{nonper:to:per}.
\end{proof}

\begin{remark}\label{rem:fundamental_mode}
  All the fundamental solutions presented in
  Section~\ref{sec:fourier_solutions} have a $O(q^{-1})$ weak
  singularity at $\tens{q}=\tens{0}$, which does not prevent normal
  use of their continuous inverse Fourier transforms. By contrast,
  discrete transform evaluation at $\bar{\tens{k}} = \tens{0}$ is not
  possible. For computing displacements, we can arbitrarily set
  e.g. $\ftens{G}(\tens{0},\bullet)=\tens{0}$, following common
  practice~\citep{stanley_fftbased_1997, zeman_finite_2017}. In this
  work, only the operator $\tens{p}\mapsto\Mcal[\tens{p}]$ requires
  this adjustment, as all other operators involve gradients of
  fundamental solutions, which have no singularity in $\tens{q}$,
  see~\eqref{eqn:recurrence}. The displacement resulting from the
  ``regularized'' operator is correct up to an additive constant and a
  linear displacement depending only on $x_3$. The former will be
  determined by contact conditions (i.e.\ imposed average load or
  displacement), while the latter can be ignored: it diverges in a
  semi-infinite domain, and we are only interested in surface
  displacements for the contact problem. Note that it produces a
  gradient constant w.r.t $x_3$, which is taken into account in
  $\tens{\nabla\Ncal}$. We suppose however that this gradient alone is
  not enough to trigger a plastic response of the material.
\end{remark}

Unlike the method of~\citet{sainsot_numerical_2002} (and subsequent
adaptations by \citealp{chen_fast_2008} and \citealp{wang_new_2013}),
the direct use of the closed-form expression of $\ftens{\nabla G}$
implies that equation~\eqref{eqn:discrete_convolution} is exact. The
discretization error inherent in the numerical calculation of
$\ftens{\nabla G}$ based on the discrete Fourier transform of the
real-space fundamental solution
$\tens{\nabla G}$~\citep{firth_discrete_1992, boyd_chebyshev_2001} is
therefore avoided, which is a definitive advantage over the
previously-mentioned sampling methods. Moreover, storage of the
discrete values $\ftens{\nabla G}(-2\pi\bar{\tens{k}},x_3,y_3)$ is not
necessary, yielding substantial memory gains, especially for higher
order operators such as $\tens{\nabla}\Ncal$ involving the
fourth-order tensor $\tens{\nabla}^2\tens{G}$.

Implementing the proposed method nevertheless entails unavoidable
approximations. One stems from the necessary truncation of the Fourier
series in~\eqref{eqn:discrete_convolution}. Another appears when
$\tens{w}$ results from plastic deformations, which require a local
(i.e.\ physical space) representation in order to perform operations
such as the return mapping procedure. Hence, the Fourier coefficients
$\tens{\mathrm{w}}$ are approximated using the discrete Fourier
transform~\citep{firth_discrete_1992, zeman_finite_2017}
\begin{equation}\label{eqn:dft}
  \ftens{\mathrm{w}}_h := \mathrm{DFT}[\tens{w}\big|_X]
\end{equation}
which is known to cause discretization
errors~\citep{boyd_chebyshev_2001}. Note that because operations like
the computation of plastic deformations are intrinsically local in the
physical space and the application of integral operators (e.g.\
$\Ncal$) is local in the partial-Fourier space (for the $x_1$ and
$x_2$ directions), the solving of the elastic-plastic problem will
involve going back and forth between the physical and the
partial-Fourier representations, using the discrete Fourier
transform. Consequently, numerical evaluation of
equation~\eqref{eqn:dft} is done with the FFT
algorithm~\citep{cooley_algorithm_1965} because of its very attractive
$\Ocal(N_1N_2\log(N_1N_2))$ computational complexity, by means of the
open-source library FFTW~\citep{frigo_design_2005} for the present
implementation.

\subsection{Discretization and integration in the $x_3$
  direction}\label{sec:x_3_discrete}

Equation~\eqref{eqn:discrete_convolution} involves integrals in the
$x_3$ direction. The purpose of this section is to present the
procedure developed to compute them. First, we introduce a generic
interpolation of $\ftens{\mathrm{w}}$ in the $x_3$ direction:
\begin{equation}
  \ftens{\mathrm{w}}_h(\tens{k}, x_3) = \sum_{j = 1}^{N_3}{\ftens{\mathrm{w}}_j(\tens{k})\phi_j(x_3)},
\end{equation}
where $\phi_j$ (resp. $\ftens{\mathrm{w}}_j$) are the interpolation
function (resp. the Fourier coefficients of $\tens{w}$) evaluated at
$x_3\in X_3$.  From equation (\ref{eqn:discrete_convolution}), the
evaluation of $\Ncal[\tens{w}]$ takes the form:
\begin{equation}\label{eqn:full_discrete_convolution}
  \Ncal[\tens{w}](\bfxt,x_3)
  = \frac{1}{4\pi^2} \sum_{\tens{k}\in\Zbb^2} \sum_{j=1}^{N_3} \Big( \int_0^{\infty} \ftens{\nabla G}(-2\pi \bar{\tens{k}},x_3,y_3)\phi_j(y_3) \dy_3 \Big): \ftens{\mathrm{w}}_j(\tens{k})\,\exp(2\pi \rmi \bar{\tens{k}}\cdot\bfxt).
\end{equation}
Considering a specific node $x_3^i \in X_3$, the associated Fourier
coefficients in equation~\eqref{eqn:full_discrete_convolution} are
computed as a weighted sum of $N_3$ convolution
integrals. Equation~\eqref{eqn:full_space_solution} reveals, after
re-arrangements, that the integral
in~\eqref{eqn:full_discrete_convolution} can be expressed in terms of
integrals of the simpler form
\begin{equation}\label{eqn:typical_integral}
  \int_0^\infty{g_k^\pm\big(q(y_3 - x_3^i)\big)\phi_j(y_3)\,\d y_3}.
\end{equation}
with $k = 0, 1$. Furthermore, $\Ncal[\tens{w}]$ has to be evaluated at
every node $x_3^i \in X_3$, so that in the worst case the total number
of integrals (\ref{eqn:typical_integral}) to be computed is
$\Ocal(N_3^2)$.

This cost may however be mitigated, and we now propose a method whose
efficiency is better than that of the naive approach consisting in
evaluating equation~\eqref{eqn:typical_integral} for all
$1\leq i,j\leq N_3$. We choose classical Lagrange polynomials as our
basis of interpolation functions. Let $E_i = [x_3^i, x_3^{i+1}]$
($i \in \{1,\ldots,N-1\}$) be an element with Lagrange polynomials
$\phi_j^L$ ($j \in \{1,\ldots,n\}$) of degree $n-1$; the center
$x_c^i$ and half-length $e_i$ of $E_i$ are given by
$x_c^i=\tfrac{1}{2}(x^i+x^{i+1})$ and $e_i=x_c^i-x^i$. Using the
change of variables $y=x^i_c+ze_i$ and the properties of exponential
functions, we easily find
\begin{subequations}
  \begin{align}
    \int_{E_i}{g_0^\pm\big(q(y_3-x_3)\big)\phi_j^L(y_3)\,\d y_3} & = g_0^\pm\big(q(x_c^i -x_3)\big)\,G_0^\pm(q, i, j),\label{eqn:int_g0}\\
    \int_{E_i}{g_1^\pm\big(q(y_3-x_3)\big)\phi_j^L(y_3)\,\d y_3} & = g_0^\pm\big(q(x_c^i -x_3)\big)\,\Big\{G_1^\pm(q, i, j) + q(x_c^i -x_3)G_0^\pm(q, i, j)\Big\},\label{eqn:int_g1}
    \intertext{with $G_k^\pm(q, i, j)$ given using the standard Lagrange polynomials $\bar{\phi}_j^L$ of degree $n-1$ defined on $[-1, 1]$ by}
    G_k^\pm(q, i, j) &:= e_i \int_{-1}^1 {g_k^\pm(qz e_i)\bar{\phi}_j^L(z)\,\dz}.
  \end{align}
\end{subequations}

It follows from~\eqref{eqn:int_g0} and~\eqref{eqn:int_g1} that the
inner integral in equation~\eqref{eqn:full_discrete_convolution}
decays as $\exp(-q|x_c^i - x_3|)$, allowing us to define a threshold
criterion $q|x_c^i - x_3| < \epsilon_{co}$ for deciding which
integrals need to be computed for the application of $\Ncal$. In
practice, if $|x^i_c-x_3|>\epsilon_{\text{co}}/q$, the integral over
$E_i$ is not computed. This results in a drastic reduction of the
number of operations needed to compute the integral
of~\eqref{eqn:discrete_convolution}, especially for large wave vectors
$\tens{k}$ since the cutoff length $\epsilon_{\text{co}}/q$ is
inversely proportional to $q = 2\pi\|\tens{k}\|$ (see~\ref{sec:app-complexity}).

In addition, while the support of
$x_3\mapsto\ftens{\mathrm{w}}_h(\tens{k},x_3)$ is \emph{a priori}
unknown in an elastic-plastic simulation, it is possible that just a
few points in a given sub-surface layer show non-zero values of
$\ftens{\mathrm{w}}_h$, in which case $\ftens{\mathrm{w}}_i$ can be
treated as a sparse vector, keeping track of the non-zero entries needed for evaluating integrals.\enlargethispage*{10ex}



\section{Elastic-plastic integral equation method}\label{sec:implicit}

The use of domain integral equation methods for elastic-plastic
analysis is now well established~\citep{telles_application_1979,
  telles_implicit_1991, bonnet_implicit_1996, gao_effective_2000,
  yu_development_2010}. In this work, we use the implicit integral
equation formulation proposed by~\citet{telles_implicit_1991}. with
yield function
$f\ysub : \mathbb{R}^{3\times3}_\text{sym} \rightarrow \mathbb{R}$ and
hardening function $f\hsub : \mathbb{R}\rightarrow\mathbb{R}$,

\subsection{Von Mises plasticity}\label{sec:von_mises}

We limit, without loss of generality, the results of this paper to von
Mises plasticity, for which the yield function
$f\ysub : \mathbb{R}^{3\times3}_\text{sym} \rightarrow \mathbb{R}$ is
given by
\begin{equation}
  f\ysub(\tens{\sigma}) := \sqrt{\tfrac{3}{2}}\|\tens{s}\|\quad\text{where }\tens{s} := \tens{\sigma}-\tfrac{1}{3}\Tr(\tens{\sigma})\tens{I}.
\end{equation}
We define the cumulated equivalent plastic strain as:
\begin{equation}
  e_p := \sqrt{\frac{2}{3}} \int_{t_0}^t{\|\dot{\tens{\epsilon}}^p\|\,\d t},
\end{equation}
where $\dot{\tens{\epsilon}}^p$ is the plastic strain rate. The
plasticity conditions are then written as:
\begin{subequations}
  \begin{align}
    f\ysub(\tens{\sigma}) - f\hsub(e_p) & \leq 0,\\
    \big(f\ysub(\tens{\sigma}) - f\hsub(e_p)\big)\dot{e}_p & = 0,\\
    \intertext{where $f\hsub : \mathbb{R}\rightarrow\mathbb{R}$ is the hardening function, with the associated flow rule}
    \dot{\tens{\epsilon}}^p &= \frac{3\dot{e}_p}{2f\ysub(\tens{\sigma})}\tens{s}(\tens{\sigma}).
  \end{align}
\end{subequations}
Although multiple choices are possible for $f\hsub$, we only consider
linear isotropic hardening, for which $f\hsub$ is given in terms of
the initial yield stress $\sigma_{\text{y}}$ and the hardening modulus
$E\hsub$ by
\begin{equation}
  f\hsub(e_p) = \sigma_{\text{y}} + E\hsub e_p.
\end{equation}

\subsection{Implicit equilibrium equation}

The elastic-plastic state $S_n$ of $\body$ at step $t_n$ is
characterized by the cumulated plastic equivalent strain $e_n^p$ and
the total plastic strain $\tens{\epsilon}_n^p$; we write
$S_n = (e_n^p, \tens{\epsilon}_n^p)$. At $t_n$, the total strain in
$\body$ can be expressed as a function of the applied boundary
tractions $\tens{t}_n^D$ (known) and the plastic strain
$\tens{\epsilon}_n^p$ (unknown):
\begin{equation}
  \label{eqn:total_strain_n}\tens{\epsilon}_n = \tens{\nabla}^\mathrm{sym}\Mcal[\tens{t}_n^D] + \tens{\nabla}^\mathrm{sym}\Ncal[\hooke:\tens{\epsilon}_n^p].
\end{equation}
Writing equation~\eqref{eqn:total_strain_n} at step $t_{n+1}$ and
taking the difference of the two, we obtain the following implicit
incremental equation:
\begin{equation}
  \label{eqn:incremental_strain}\Delta\tens{\epsilon}_n = 
  \tens{\nabla}^\mathrm{sym}\Mcal[\Delta\tens{t}_n^D] + \tens{\nabla}^\mathrm{sym}\Ncal\big[\hooke:\Delta\tens{\epsilon}_n^p(\Delta\tens{\epsilon}_n;S_n) \big].
\end{equation}
We can see that this equation combines the balance of momentum
equation~\eqref{eqn:equilibrium}, the kinematic
compatibility~\eqref{eqn:epsilon} (through the use of the symmetrized
gradient $\tens{\nabla}^\mathrm{sym}$) and the constitutive behavior
through~\eqref{eqn:sigma} and
$\Delta\tens{\epsilon}_n^p(\Delta\tens{\epsilon}_n;S_n)$ (via a return
mapping algorithm symbolized by
$\Delta\tens{\epsilon}_n^p(\Delta\tens{\epsilon}_n;S_n)$). Therefore,
solving~\eqref{eqn:incremental_strain} for $\Delta\tens{\epsilon}_n$
gives the solution of the elastic-plastic problem. To achieve this, we
seek the root of the residual function:
\begin{equation}
  \Rcal[\Delta\tens{\epsilon}_n]
  := \Delta\tens{\epsilon}_n - \tens{\nabla}^\mathrm{sym}\Mcal[\Delta\tens{t}_n^D] - \tens{\nabla}^\mathrm{sym}\Ncal\big[\hooke:\Delta\tens{\epsilon}_n^p(\Delta\tens{\epsilon}_n;S_n) \big].
\end{equation}

Equation~\eqref{eqn:incremental_strain} being non-linear in
$\Delta\tens{\epsilon}_n$, we have to use an iterative method. Since
the operators $\tens{\nabla}^\mathrm{sym}\Mcal$ and
$\tens{\nabla}^\mathrm{sym}\Ncal$ applied to given arguments are very
efficiently evaluated thanks to the use of the FFT, we want to avoid
assembling them. Newton-Krylov solvers (see e.g.\ references
in~\citep{knoll_jacobianfree_2004}) are traditionally well suited for
this type of approach, especially since the consistent tangent
operator for equation~\eqref{eqn:incremental_strain} is
known~\citep{bonnet_implicit_1996}. However, we will present here a
convenient Jacobian-less method~\citep{lacruz_spectral_2006} to
solve~\eqref{eqn:incremental_strain}, as it offers several advantages
over Newton-Krylov solvers, and is readily available as part of the
SciPy~\citep{jones_scipy_2001} library.

\subsection{Jacobian-Free Spectral Residual Method}

The DF-SANE algorithm developed by~\citet{lacruz_spectral_2006} is an
attractive algorithm because of its low memory requirements compared
to traditional Krylov solvers which require storage of span vectors
for a subpart of Krylov space. It also simplifies the implementation,
as it does not require knowledge\footnote{Even approximation of the
  Jacobian via finite differences.} nor evaluation of the consistent
tangent operator. The iteration goes as:
\begin{equation}
  \delta\tens{\epsilon}_n^{i} = - \alpha_i\sigma_i\Rcal[\Delta\tens{\epsilon}_n^i], \qquad
  \Delta\tens{\epsilon}_n^{i+1} = \Delta\tens{\epsilon}_n^{i} + \delta\tens{\epsilon}_n^{i}
\end{equation}
with
\begin{equation}
  \sigma_i
  = \frac{\|\delta\tens{\epsilon}_n^{i-1}\|^2}{\langle\delta R^{i-1}, \delta\tens{\epsilon}_n^{i-1}\rangle_{L_2}}, \qquad
  \delta R^i = \Rcal[\Delta\tens{\epsilon}_n^{i+1}] - \Rcal[\Delta\tens{\epsilon}_n^i]
\end{equation}
where $\langle \bullet, \bullet \rangle_{L_2}$ is the appropriate
scalar product on $L^2(\body;\mathbb{R}^{3\times 3})$ and $\alpha_i$
is a step size determined by a line search on
$\|\Rcal[\Delta\tens{\epsilon}]\|^2$. More details can be found
in~\citep{lacruz_spectral_2006, birgin_spectral_2014}.

\section{Elastic-plastic contact}\label{sec:coupling}

In order to solve the full elastic-plastic contact problem, one needs
to solve the unknown boundary tractions $\tens{T}[\tens{u}]$ and the
plastic deformations $\tens{\epsilon}^p$. The displacement field
satisfying equilibrium once these quantities have been resolved is
obtained with equation~\eqref{eqn:operators}. The difficulty of the
elastic-plastic contact problem resides in the coupling between the
contact and the plastic problems: a change in plastic deformation will
displace the surface, changing the gap function and therefore the
contact solution, while a change in surface traction will influence
the plastic deformations.

In this work, we adopt an alternating coupling
strategy~\citep{jacq_development_2002}: the contact problem is solved
with a fixed distribution of plastic deformations, then the resulting
contact tractions are used to update the plastic deformations,
changing the residual displacements of the surface. The latter are
used to compute a new contact solution, and so on until
convergence. We first describe how the elastic contact problem is
solved, then lay out the coupling algorithm.


\subsection{Elastic contact}\label{sec:elastic_contact}
Since displacement solution of the elastic contact problem is
$\body_p$-periodic and computed using
Theorem~\ref{thm:periodic_convolution} applied to $\Mcal$ (taking into
account remark~\ref{rem:fundamental_mode}), we define the following
function space:
\begin{equation}
  \bar{H}^1(\body_p;\mathbb{R}^3) = \left\{\tens{u} + \bar{\tens{u}}\ \Big|\ \tens{u} \in H^1(\body;\mathbb{R}^3), \bar{\tens{u}} \in \mathbb{R}^3\text{ with } \ftens{\mathrm{u}}(\tens{0}, \bullet) = 0 \text{ and } \tens{u} \text{ is } \body_p\text{-periodic} \right\},
\end{equation}
which is a space of $\body_p$-periodic functions whose fundamental
mode (i.e.\ the horizontal average) is constant with respect to
$x_3$. The space of admissible solutions to the contact problem is:
\begin{equation}
  \Lambda := \left\{\tens{u} \in \bar{H}^1(\body_p;\mathbb{R}^3) : g[\tens{u}] \geq 0\right\}.
\end{equation}
The elastostatic contact problem can be written as a minimization
principle~\citep{duvaut_inequations_1972}, for a fixed plastic strain
distribution $\tens{\epsilon}_p$:
\begin{equation}
  \label{eqn:variational}\inf_{\tens{u}\in\Lambda}\left\{\frac{1}{2}\int_\body{\tens{\sigma}[\tens{u}, \tens{\epsilon}^p]:\left(\tens{\epsilon}[\tens{u}] - \tens{\epsilon}^p\right)\,\d V}\right\}.
\end{equation}
An additional constraint is needed to determine a unique solution. One
can prescribe the value of $\tens{\bar{u}}$ (the horizontal average of
the displacement) or impose the total load acting on $\partial\body$
as in equation~\eqref{eqn:total_pressure}.

In linear elastic contact, it is usual to rewrite
problem~\eqref{eqn:variational} as a minimization principle
on~$\partial\body$~\citep{kalker_variational_1977}, as it can be
easily exploited by boundary methods. This is usually done by assuming
$\tens{u}$ satisfies~\eqref{eqn:equilibrium}, which is automatically
enforced in boundary and volume integral methods. We can apply a
similar procedure via linear superposition: one can show that the
minimizer of~\eqref{eqn:variational} can be written as
$\tens{u} = \tens{v} + \Ncal[\hooke:\tens{\epsilon}^p]$, where
$\tens{v}$ is a member of
\begin{equation}\label{eqn:solution_space}
  \Gamma({\tens{\epsilon}_p}) := \left\{\tens{v}\in \bar{H}^1(\body_p;\mathbb{R}^3) : g\big[\tens{v} - \Ncal [\tens{\epsilon}^p]\big] \geq 0 \text{ and } \div(\tens{\sigma}[\tens{v}]) = \tens{0}\right\},
\end{equation}
and minimises the potential energy written as surface integral
\begin{equation}\label{eqn:variational_surface}
  \inf_{\tens{v}\in\Gamma({\tens{\epsilon}_p})}\left\{\frac{1}{2}\int_{\partial\body}{\tens{T}[\tens{v}]\cdot\tens{v}\,\d S}\right\}.
\end{equation}
\begin{remark}\label{rem:modified_contact}
  Note the modified contact condition
  $g\big[\tens{v} - \Ncal[\hooke:\tens{\epsilon}^p]\big] \geq 0$,
  which explicitly accounts for the residual displacement at the
  surface caused by the plastic deformations. Since the contact
  problem is solved with fixed $\tens{\epsilon}^p$,
  \eqref{eqn:variational_surface} corresponds to an elastic contact
  problem with a modified contact surface
  $h_\text{mod} = h - \Ncal[\hooke:\tens{\epsilon}^p]\cdot
  \tens{e}_3\Big|_{\partial\body}$, a property that the
  elastic-plastic contact coupling algorithm exploits.
\end{remark}
Because of the previously mentioned property of displacements obtained
by our volume integral approach, $\Gamma$ can be readily approximated
using the span space of the interpolation functions underlying the
discrete Fourier transform. The minimization of the functional defined
in problem~\eqref{eqn:variational_surface} has been the subject of
extensive literature, and the reader is referred
to~\citep{stanley_fftbased_1997, polonsky_fast_1999,
  polonsky_numerical_1999, rey_normal_2017, campana_practical_2006,
  muser_meeting_2017} for various examples in the realm of normal
friction-less contact. Note that although we consider neither adhesion
nor friction in the elastic contact problem, they can without
difficulty be included in the global formulation (see e.g.\
\citet{wang_new_2013, pohrt_complete_2014,
  li_semianalytical_2003}). In this work, we use the modified
conjugate gradient devised by~\citet{polonsky_numerical_1999} to
solve~\eqref{eqn:variational_surface} in dual
form~\citep{kalker_variational_1977}, with the additional
condition~\eqref{eqn:total_pressure}.

\subsection{Plastic coupling}

The full elastic-plastic contact coupling scheme, developed
by~\citet{jacq_development_2002}, is given in
Algorithm~\ref{alg:plastic_coupling}. The algorithm leverages the
modified contact condition~\eqref{eqn:solution_space} (see
remark~\ref{rem:modified_contact}). The central loop's purpose is to
determine the increment of residual displacements
$\Delta\mathbb{U}_3^p$ at the surface of the elastic solid. These are
incorporated into the surface profile $\mathbb{H}$ for the elastic
contact solve step, during which plastic deformations are fixed. The
result of the elastic contact is a traction distribution $\mathbb{T}$
on $\partial\body$ that acts as a Neumann boundary condition to the
elastic-plastic problem. The solve step yields the total strain
increment $\Delta\mathbb{E}$, which is used to compute the residual
surface displacement. The convergence condition is established on the
evolution of $\Delta\mathbb{U}^p_3$ from one iteration to the
next. Finally, when convergence is reached within a specified
tolerance, the state $S$ is updated with the converged total strain
increment $\Delta\mathbb{E}$ and traction increment
$\Delta\mathbb{T}$.

\begin{algorithm}
  \caption{Elastic-plastic contact coupling
    algorithm~\citep{jacq_development_2002}.}\label{alg:plastic_coupling}
  \begin{algorithmic}
    \State Data: $W$ (normal load), $\mathbb{H}$ (surface profile),
    $S$ (current state), $\epsilon_\mathrm{tol}$ (tolerance),
    $N_\mathrm{max}$ (maximum iterations) \State
    $\Delta\mathbb{U}_{3,\text{prev}}^p \gets 0$ \State
    $\mathbb{T}_\mathrm{prev} \gets \tens{T}[S]$\MyComment{Previous
      tractions} \State $k \gets 1$ \State
    $\mathbb{H}_0 \gets \mathbb{H}$ \State
    $h_\text{norm} \gets \|\mathbb{H}_0\|$ \Repeat \State
    $\mathbb{T} \gets \mathbf{elastic\_contact}(W, \mathbb{H})$
    \MyComment{\citet{polonsky_numerical_1999} with FFT} \State
    $\Delta\mathbb{T} \gets \mathbb{T} - \mathbb{T}_\mathrm{prev}$
    \State
    $\Delta \mathbb{E} \gets \mathbf{plasticity}(\Delta\mathbb{T},
    S)$\MyComment{Strain increment s.t. $\Rcal[\Delta\mathbb{E}] = 0$}
    \State
    $\Delta \mathbb{U}_3^p \gets
    \Ncal[\hooke:\Delta\tens{\epsilon}_p(\Delta\mathbb{E},
    S)]\big|_{\partial\body}\cdot \tens{e}_3$\MyComment{Surface
      residual displacement increment} \State
    $e \gets
    \|\Delta\mathbb{U}_3^p-\Delta\mathbb{U}_{3,\text{prev}}^p\|\,/\,
    h_\text{norm}$\MyComment{Error} \State
    $\Delta \mathbb{U}_{3,\text{prev}}^p \gets \Delta
    \mathbb{U}_{3,\text{prev}}^p + \lambda (\Delta \mathbb{U}_{3}^p -
    \Delta \mathbb{U}_{3, \text{prev}}^p)$ \MyComment{Relaxation}
    \State
    $\mathbb{H} \gets \mathbb{H}_0 - \Delta
    \mathbb{U}_{3,\text{prev}}^p$ \State
    $\Delta\mathbb{U}_{3,\text{prev}}^p \gets \Delta\mathbb{U}_{3}^p$
    \State $k \gets k + 1$ \Until
    $e < \epsilon_\mathrm{tol} \vee k > N_\mathrm{max}$ \State
    $\mathbf{update}(S, \Delta\mathbb{E})$\MyComment{Increment $e_p$
      and $\tens{\epsilon}_p$}
  \end{algorithmic}
\end{algorithm}

The user of Algorithm~\ref{alg:plastic_coupling} is free to use any
elastic contact solver and non-linear plasticity solver as drop-ins
for $\mathbf{elastic\_contact}$ and $\mathbf{plasticity}$. For the
simulations presented in this work, the elastic contact solver uses
the projected conjugate gradient proposed
by~\citet{polonsky_numerical_1999}, with an FFT approach for the
gradient computation\footnote{Note that the influence coefficients of
  those methods are based on the \citet{westergaard_bearing_1939} and
  \citet{johnson_contact_1985} solutions but match the expression we
  give for the Boussinesq tensor in the partial Fourier
  space.}~\citep{stanley_fftbased_1997, rey_normal_2017}. For the
non-linear plastic solver, we use the DF-SANE algorithm described
above. The relaxation parameter $\lambda$ can take values in
$\rbrack 0, 1\rbrack$ and helps the algorithm to converge on large
loading steps.


\section{Method validation}\label{sec:validation}

We now present two validation steps for our computational
methodology. First, evaluations of integral operators $\Ncal$ and
$\tens{\nabla}\Ncal$ by our approach are compared in
Section~\ref{sec:sphere} to the analytical solution of a hydrostatic
eigenstrain in a spherical inclusion embedded in a half-space. Then,
in Section~\ref{sec:ep_validation}, the entire elastoplastic contact
method is assessed using comparisons to literature results and a
reference finite-element simulation.

\subsection{Validation of the Mindlin operators}\label{sec:sphere}

The displacement generated by a constant hydrostatic eigenstrain
$\tens{\epsilon} = \alpha T \Id$ applied in a spherical region
embedded in a half-space is given by $\tens{u}=\Ncal[\tens{w}]$ with
$\tens{w}=\hooke:\tens{\epsilon} = 2\mu\alpha
T\tfrac{1+\nu}{1-2\nu}\tens{I}$, and has been derived analytically in
closed form by \citet{mindlin_thermoelastic_1950}. We use that
reference solution to validate our implementation of operators $\Ncal$
and $\tens{\nabla}\Ncal$. Let the inclusion support be the ball of
radius $a$ and center $(0,0,c)$ (with $c>a$). The surface
displacements $u_r$ and $u_{z}$ are given, using cylindrical
coordinates $(r, \theta, z)$, by:
\begin{align}
  u_r(r) = \frac{4a^3}{3R_1^3}\beta(1-\nu)r,\qquad
  u_z(r) = -\frac{4a^3}{3R_1^3}\beta(1-\nu)c,
\end{align}
with $R_1 = \sqrt{r^2 + {c}^2}$ and
$\beta = \alpha T\frac{1+\nu}{1-\nu}$. For this example, we use $c=2a$
and $\nu=0.3$, while the periodicity cell $\Bcal_p$ and its
discretization $\tens{N}$ are defined by
$\Bcal_p=\rbrack-15a,15a\lbrack^2\times\rbrack0,10a\lbrack$ and $\tens{N}=(128,128,126)$.
Figures~\ref{fig:mindlin_cheng_disp}a
and~\ref{fig:mindlin_cheng_disp}b show the horizontal displacement
$u_r$ and the vertical displacement $u_z$, respectively, evaluated
along the $x_1=x_3=0$ line. The computed and reference displacements
are in good agreement in the central zone, the expected distortion
induced by the periodic boundary conditions becoming apparent only
close to the boundary of $\Bcal_p$.

\begin{figure}
  \centering \includegraphics[draft=false]{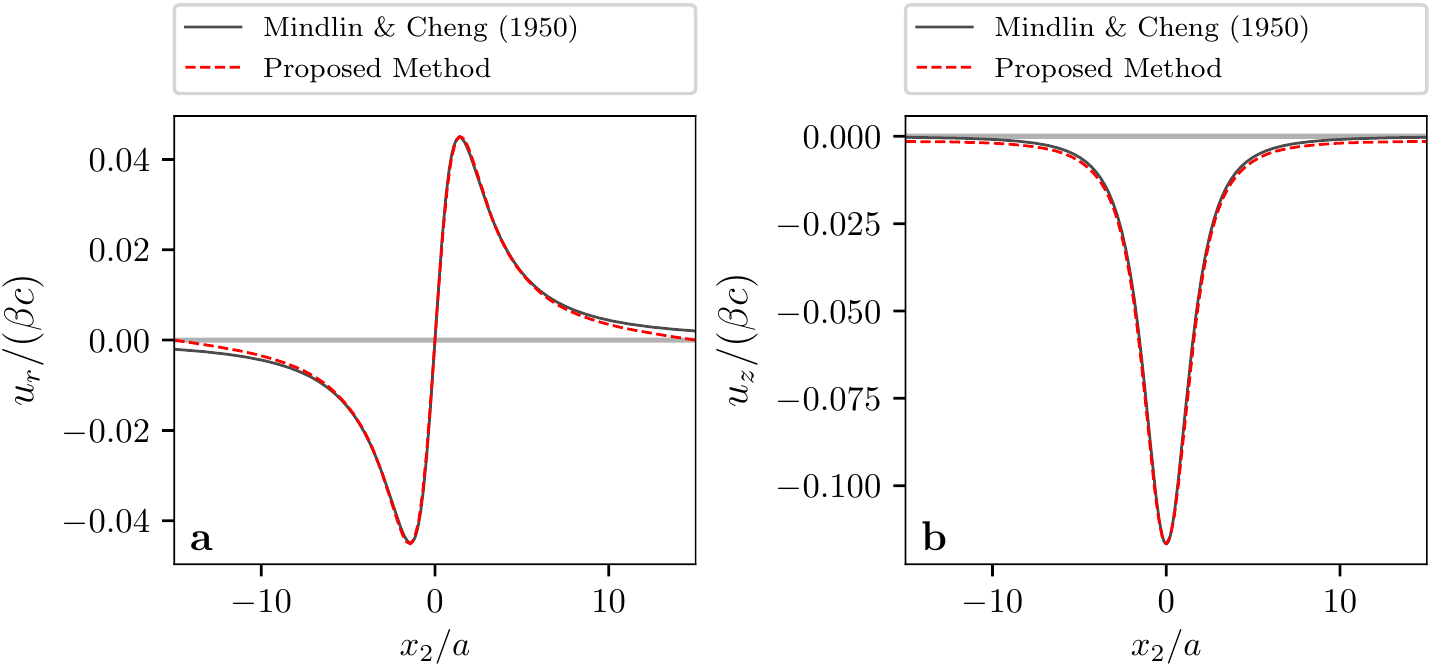}
  \caption{\textbf{Normalized surface displacements due to a
      hydrostatic spherical inclusion in a half-space}. Comparison
    between the~\citet{mindlin_thermoelastic_1950} solution and the
    proposed VIM.\@The displacements are shown along the
    $x_1 = x_3 = 0$ line. The agreement between the analytical and
    computed solutions is very good in the central part of the
    periodicity cell, while distortion induced by periodic conditions
    becomes apparent near its boundary.}\label{fig:mindlin_cheng_disp}
\end{figure}

Recalling now that
$\tens{\sigma}=\hooke:\tens{\nabla}\Ncal[\tens{w}]-\tens{w}$,
Fig.~\ref{fig:mindlin_cheng_stress} shows the validation of
$\tens{\nabla}\Ncal$ by way of a comparison of stresses
$\sigma_\theta$, and $\sigma_z$ produced by the inclusion, whose
values along the vertical line $r=0$ going through the inclusion
center are shown in Figs.~\ref{fig:mindlin_cheng_stress}a
and~\ref{fig:mindlin_cheng_stress}b, respectively. The relevant
analytical values are (for $r=0$):
\begin{align}
  \sigma_\theta
  &= \frac{2\mu\beta a^3}{3}\Big( \frac{4\nu-3}{{(z\shp c)}^3}+\frac{6c}{{(z\shp c)}^4} + \frac{1}{{|z\shm c|}^3} \Big)
  &
    \sigma_z
  &= \frac{2\mu\beta a^3}{3}\Big( \frac{6z+c}{{(z\shp c)}^4} - \frac{2}{|z\shm c|^3} \Big)
  && z\in[0,a\lbrack\cup]3a,+\infty\lbrack,\\
  \sigma_\theta
  &= \frac{2\mu\beta a^3}{3}\Big( \frac{4\nu-3}{{(z\shp c)}^3}+\frac{6c}{{(z\shp c)}^4} - \frac{2}{a^3} \Big),
  &
    \sigma_z
  &= \frac{2\mu\beta a^3}{3}\Big( \frac{6z+c}{{(z\shp c)}^4}- \frac{2}{a^3} \Big),
  && z\in\rbrack a,3a\lbrack
\end{align}
with $\beta = \alpha T\frac{1+\nu}{1-\nu}$. We observe on
Figures~\ref{fig:mindlin_cheng_stress}a
and~\ref{fig:mindlin_cheng_stress}b a Gibbs effect at the inclusion
boundary, caused by the Fourier approximation of the discontinuity of
the eigenstrain, as mentioned in
Section~\ref{sec:spectral_interpolation}. This should however not have
a significant effect on elastoplastic simulations, as plastic
deformations should be continuous provided there is no shear
band. Nonetheless, our method accurately represents the large
discontinuity in the tangential stress, see
Figure~\ref{fig:mindlin_cheng_stress}b.

\begin{figure}[t]
  \centering \includegraphics[draft=false]{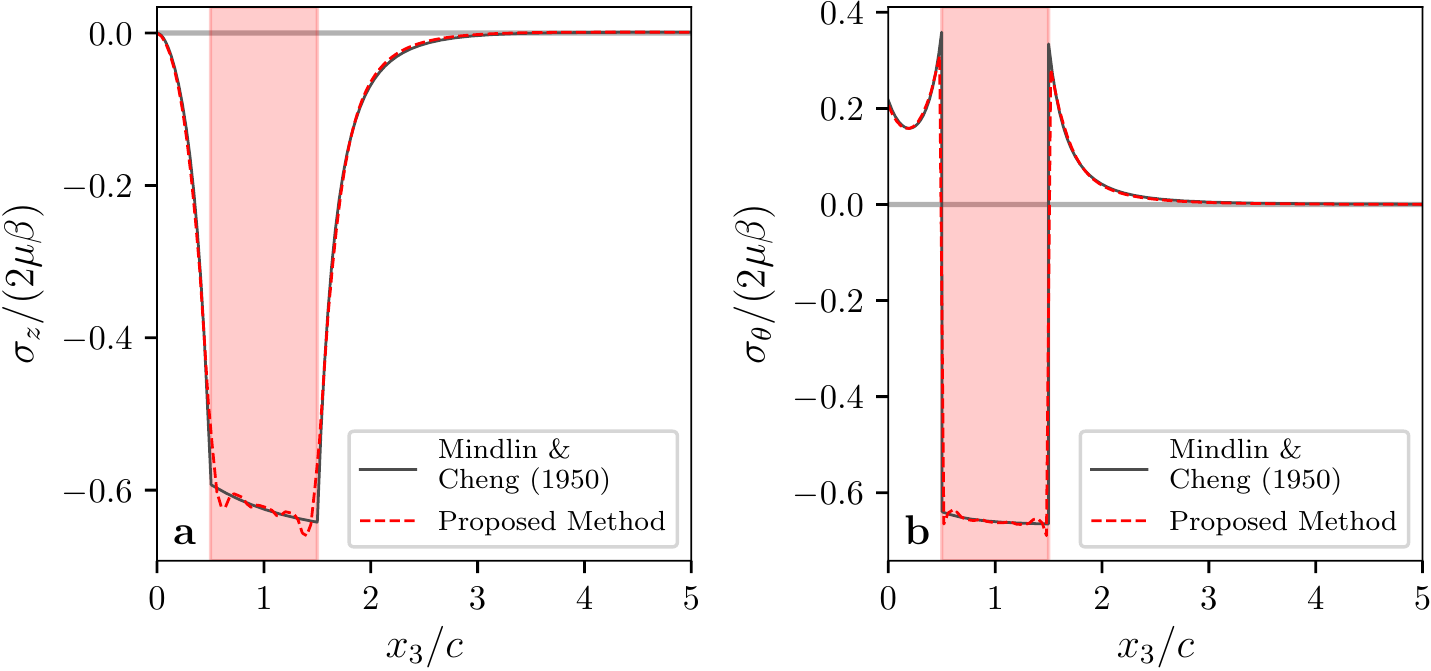}
  \caption{\textbf{Normalized stresses due to a hydrostatic spherical
      inclusion in a half-space}. Comparison between
    the~\citet{mindlin_thermoelastic_1950} solution and the proposed
    volume integral method. The stresses are shown along the $x_3$
    axis. The red region is where the eigenstrain
    $\tens{\epsilon} = \alpha T \Id$ is imposed. One can observe a
    good agreement of the numerical approximation with the analytical
    solution outside the inclusion. Some Gibbs effect can be observed
    at the boundary of the inclusion, with spurious oscillations in
    the inclusion. This is due to the Fourier approximation of the
    discontinuous eigenstrain
    function.}\label{fig:mindlin_cheng_stress}
\end{figure}

\subsection{Elastoplastic contact validation}\label{sec:ep_validation}

We validate our complete proposed method (including elastoplastic
contact) against an axi-symmetric FEM analysis
by~\citet{hardy_elastoplastic_1971} which provides the surface
pressure distribution for a rigid spherical indenter on an
elastic-perfectly plastic material. For Hertzian contact with
$\nu = 0.3$, the maximum shear stress occurs at a depth
$x_{3} \approx 0.57a$, with $a$ the contact radius. Moreover, the von
Mises stress reaches $\sigma_{\text{y}}$ for a maximum surface
pressure of
$p_{\text{y}} =
1.6\sigma_{\text{y}}$~\citep{johnson_contact_1985a}. From this, one
can compute the total load $W_{\text{y}}$ and contact radius
$a_{\text{y}}$ at the onset of yield~\citep{hardy_elastoplastic_1971}:
\begin{align}
  \text{(a) \ }W_{\text{y}} = \frac{\pi^3 R^2}{6 {E^{\star}}^2}p_{\text{y}}^3,\quad\quad
  \text{(b) \ }a_{\text{y}} = \sqrt{\frac{3W_{\text{y}}}{2\pi p_{\text{y}}}},
\end{align}
where $R$ is the indenter radius and $E^{\star} := E / (1-\nu^2)$ is
the Hertz elastic modulus. We also define
$\tau_{\text{y}} := \sigma_{\text{y}} / \sqrt{3}$ as the shear yield
stress. $W_{\text{y}}$, $a_{\text{y}}$ and $\tau_{\text{y}}$ are used
to normalize loads, lengths and stresses
respectively. Figure~\ref{fig:hardy} compares the surface pressure
$p[\tens{u}]$ computed using Algorithm~\ref{alg:plastic_coupling} with
the corresponding values obtained by~\citet{hardy_elastoplastic_1971}
(dashed lines), for different load ratios $W / W_{\text{y}}$. The dark
continuous line is the circumferential average of the pressure, while
the lighter zone shows the maximum and minimum pressure for a given
radial coordinate $r$. The difference between the maximum and minimum
pressures at the edge of contact is due to the discretization and the
periodic boundary conditions, which make the contact region shape
deviate from a disk. The simulation parameters are given
in~\ref{app:hardy_comparison}. As mentioned
in~\citep{hardy_elastoplastic_1971}, normalized results are
independent of the $E / \tau_{\text{y}}$ ratio.\enlargethispage*{5ex}

\begin{figure}[t]
  \centering \includegraphics[draft=false]{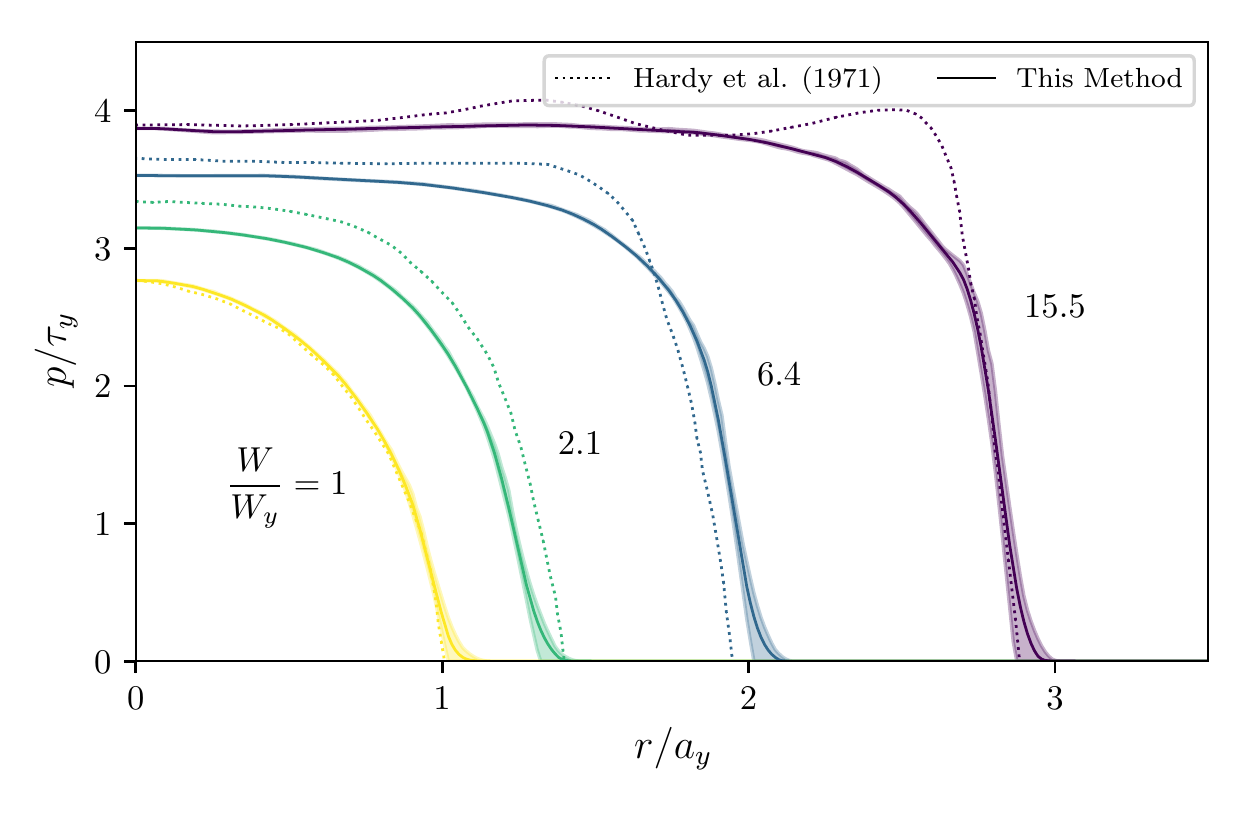}\vspace*{-4ex}
  \caption{\textbf{Elastic perfectly-plastic Hertzian contact,
      comparison with~\citet{hardy_elastoplastic_1971}}. Increase of
    the applied load beyond the initial yield shows that the pressure
    profile deviates from the elliptic Hertzian profile by flattening
    of the curve at the axis of symmetry, with a plateau whose extent
    increases with the load. The results
    of~\citet{hardy_elastoplastic_1971} however show oscillations of
    the pressure profile at high plastic loads which is not reproduced
    by our simulation. As there is, to our knowledge, no physical
    reason to these oscillations, they are likely due to the coarse
    discretization of the finite-element mesh they used in their
    study\protect\footnotemark. The simulation parameters are given
    in~\ref{app:hardy_comparison}}\label{fig:hardy}
\end{figure}
\footnotetext{Their stiffness matrix fits in the 512K RAM of the
  IBM/360 used for simulations, which is impressive for 1971. The
  smoothness of the results (extracted from figure 5) is likely due to
  the figure being drawn by hand.}

\begin{figure}[t]
  \centering \includegraphics[draft=false]{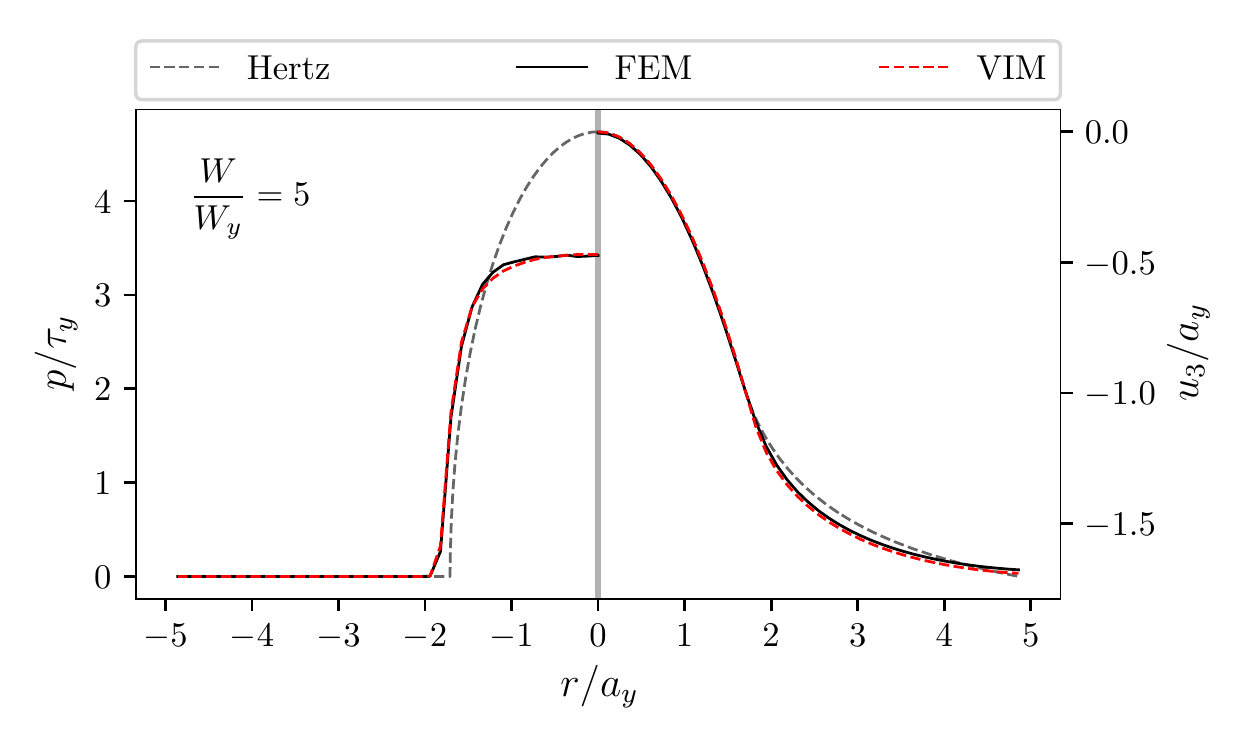}\vspace*{-4ex}
  \caption{\textbf{Elastoplastic Hertzian contact}. Comparison between
    the elastic Hertz solution~\citep{johnson_contact_1985a}, a
    simulation using algorithm~\ref{alg:plastic_coupling} where the
    surface residual displacement is calculated by FEM, and a full VIM
    simulation. Both simulations have identical surface
    discretization. As expected~\citep{johnson_experimental_1968}, the
    plastic pressure distribution deviates from the elliptical shape
    of the Hertzian distribution. The contact radius is larger in the
    plastic case. The simulation parameters are given
    in~\ref{app:akantu_comparison}.}\label{fig:plastic_hertz}
\end{figure}

Both solutions show a flattening of the initially-ellipsoidal pressure
distribution near the axis of symmetry (also observed
in~\citet{johnson_experimental_1968, jacq_development_2002,
  wang_new_2013}), with a plateau that extends as the load is
increased. There are however significant differences between the two
sets of results: in the plastic range, the agreement on both the
contact radius and the plateau value is poor. Also, the results of
\citet{hardy_elastoplastic_1971} feature oscillations in the pressure
profile at the highest loading increment, which are likely due to the
coarse mesh that was used, combined with large loading
steps~\footnote{Our own experiments with finite-element simulations
  exhibited the same behavior, as well as disagreement in the plateau
  value, for too-large element sizes.}.

To assess these effects and confirm the latter remark, we have
effected additional comparisons, this time to an implementation of
algorithm~\ref{alg:plastic_coupling} where the strain increment is
computed using a first-order FEM approach with the open-source
high-performance code Akantu~\citep{richart_implementation_2015}, the
rest of the algorithm using the same code as for the VIM. The
geometry, material properties, loading and discretization are as given
in~\ref{app:akantu_comparison}. Figure~\ref{fig:plastic_hertz} shows
the surface pressure and vertical displacement for a load ratio of
$W/W_{\text{y}} = 5$, the corresponding Hertz elastic solution being
also plotted for additional comparison. We observe an excellent
agreement between the FEM and VIM solutions. In addition,
Figure~\ref{fig:plastic_deformations} shows good qualitative agreement
between the values of $|\tens{\epsilon}^p|$ on a symmetry plane
obtained for the same load with the FEM and the VIM.

\begin{figure}
  \centering
  \begin{tikzpicture}
    \node(0, 0) {\includegraphics[draft=false,
      width=0.8\textwidth]{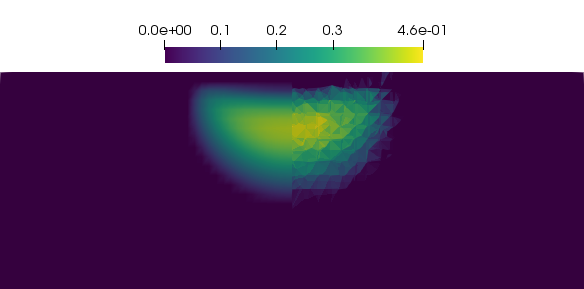}};
    \draw[white, very thick, dashed]
    (0, 1.7) -- ++(0, -5);
    \draw(-1, -2.5) node[white]{\textbf{VIM}};
    \draw(1, -2.5) node[white]{\textbf{FEM}};
    \draw(0, 3) node[centered, scale=1.2]{\textbf{Plastic strain norm}};
  \end{tikzpicture}
  \caption{\textbf{Plastic strain norm $|\tens{\epsilon}^p|$ in
      elastoplastic Hertzian contact}. Results obtained with
    algorithm~\ref{alg:plastic_coupling}, with the plastic problem
    solved using our VIM approach (left) or a first order FEM
    (right). Note that in the FEM case, plastic deformations are
    piece-wise constant, whereas in the volume integral result they
    are interpolated between nodal values (cf.\
    Section~\ref{sec:discretized_operators}). Nonetheless, the
    solutions give similar plastic zone size and maximum plastic
    deformation norm. The simulation parameters are given
    in~\ref{app:akantu_comparison}. }\label{fig:plastic_deformations}
\end{figure}


\section{Algorithmic complexity}\label{sec:complexity}
We now compare the computational cost of the application of the
operator $\Ncal$ to that of an elastic solve step of a finite elements
simulation involving the same number $N=N_1 N_2 N_3$ of
nodes. Irrespective of the numerical method used, the optimal
complexity would be $\Ocal(N)$. In the proposed methodology, the
evaluation of $\Ncal[\tens{w}]$ for given $\tens{w}$ is decomposed
into two computational steps: the multiple 2D fast-Fourier transforms
and the computation of equation~\eqref{eqn:typical_integral}. Assuming
$N_1 = N_2 = N_3$, the former has a complexity of
$\Ocal(N_1N_2N_3\log(N_1N_2)) \sim \Ocal(N\log(N))$, while the latter
has a complexity of $\Ocal(N_1N_2N_3^2) \sim \Ocal(N^{4/3})$ with a
naive implementation. Using the cutoff proposed
in~\ref{sec:x_3_discrete} brings the algorithmic complexity of the
evaluations of~\eqref{eqn:typical_integral} down to $\Ocal(N)$
(see~\ref{sec:app-complexity}), so the asymptotic cost of evaluating
$\Ncal[\tens{w}]$ is $\Ocal(N\log(N))$. For the direct solve step of a
finite element elastic simulation with $N$ nodes, the algorithmic
complexitiy of a sparse Cholesky factorization is
$\Ocal(N^{3/2})$\footnote{This bound is established for 2D finite
  elements, but we expect it to remain representative in the present
  3D case.}~\citep{lipton_generalized_1979}. We compare in
figure~\ref{fig:complexity} the relative computation times of the
application of $\Ncal$ and the elastic solve step of Akantu, which
uses the direct solver package MUMPS~\citep{amestoy_fully_2001} to
perform the factorization. A regular mesh with $N$ nodes was
used\footnote{Note that actually applying the FEM to the present
  mechanical problem would in fact require discretization of a much
  larger domain to model the fields away from the potentially plastic
  zone.}. We can observe that for large problem sizes the computation
times fit the theoretical asymptotic complexities, showing the clear
advantage of the proposed VIM over FEM. One should also note that
memory needed for the factorization of the stiffness matrix for
$2^{21}$ nodes was larger than 128~GB whereas the VIM only required
1.27~GB for this case. For large problems, both the memory imprint
and (absolute) computing time are two orders of magnitude smaller with
the proposed approach.

As a closing remark, evaluating $\Ncal$ in physical space and without
any acceleration method would entail a $\Ocal(N^2)$ complexity, making
its use unrealistic for 3D problems. This complexity can be brought
down to $\Ocal(N)$ by means of a multi-level fast multipole (ML-FM)
approach. However, implementing the latter is quite involved in
general, and here would require intricate and expensive close-range
numerical quadrature methods for dealing with the strong singularity
and complex expressions of the kernel in the physical space. This may
explain why ML-FM treatments of VIMs have received only limited
attention to date.

\begin{figure}[t]
  \centering \includegraphics{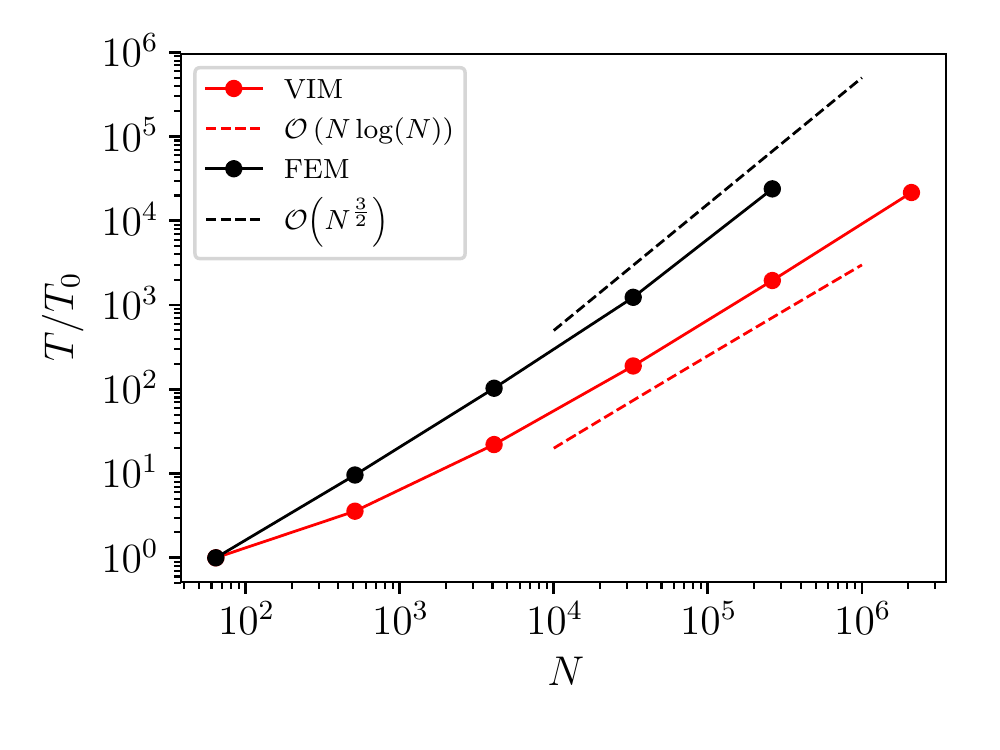}\vspace*{-4ex}
  \caption{\textbf{Relative computation times for VIM and FEM.} We
    compare the application of the operator $\Ncal$ to an elastostatic
    FEM solve step (Cholesky factorization). The reference time $T_0$
    is $T_0 = 1.41\cdot 10^{-3}$s for the VIM and
    $T_0 = 1.77\cdot 10^{-2}$s for the FEM. The scaling for large
    problem sizes agrees with the theoretical algorithmic
    complexities. For $N = 2^{21}$ the stiffness matrix factorization
    needed over 128~GB of memory, an amount two orders of magnitude
    larger than what is required for the VIM.}\label{fig:complexity}
\end{figure}



\section{Rough surface contact}\label{sec:rough}

We now showcase the applicability of the method to rough surface
contact. Self-affine surfaces are commonly used in this
context~\citep{yastrebov_contact_2012, muser_meeting_2017}. They are
characterized by a power-spectral density obeying a power
law~\citep{persson_nature_2005}. For this simulation, we generate a
random rough surface using an FFT filter
algorithm~\citep{hu_simulation_1992}, specifying a Hurst exponent of
0.8, with the largest wavelength $\lambda_l$, the roll-off wavelength
$\lambda_r$ and the smallest wavelength $\lambda_s$ given by
$L / \lambda_l = 1$, $\lambda_l / \lambda_r = 2$ and
$\lambda_s / \Delta l = 4$ (cf.\ \ref{app:rough_surface} for graphical
representation of the PSD) where $L := L_1 = L_2$ and
$\Delta l = L / N$ (equal number of points $N = 512$ in $x_1$ and
$x_2$). The $x_3$ domain $L_3 = 0.4\cdot L$ is discretized in
$N_3 = 64$ points, yielding upwards of 100 million degrees of freedom
for the plasticity problem (where $\Delta\tens{\epsilon}$ is the
primary unknown) solved on a single compute node. The Poisson
coefficient is set\footnote{These values do not correspond to any
  specific material and were arbitrarily selected for the purpose of
  showcasing the effect of plasticity.} to $\nu = 0.3$, the yield
ratio to $\sigma\ysub / E = 10^{-1}$ and the hardening modulus to
$E\hsub = 10^{-2}E$.

\begin{figure}[t]
  \centering \includegraphics[draft=false]{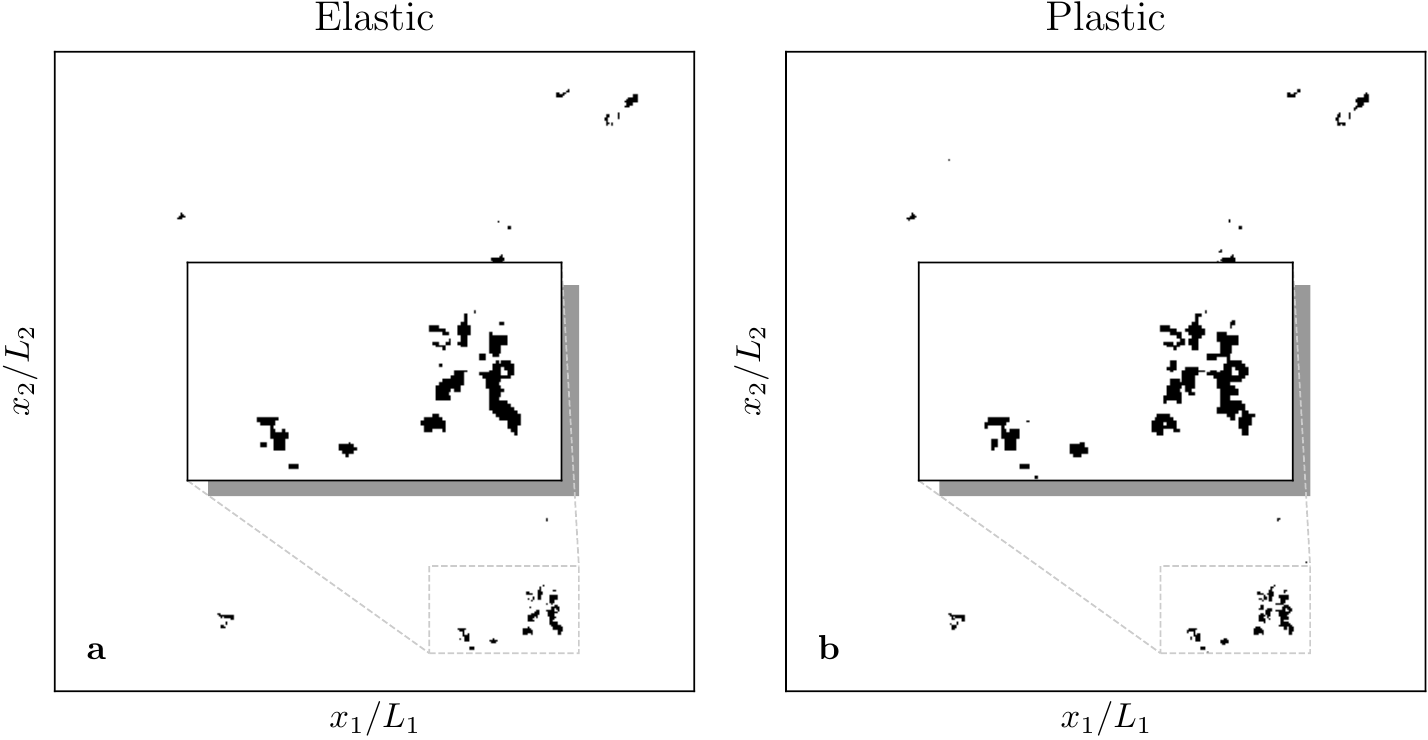}
  \caption{\textbf{Contact area comparison in rough surface
      contact}. Both simulations have the same contacting rough
    surface. Figure a shows the elastic contact area, with the inset
    zooming on the lower left cluster of micro-contacts. Figure b
    shows the contact area obtained for the same load in the case of
    elastic-plastic contact, which is 24\% larger than in the elastic
    case. The details of the micro-contacts are also different:
    disjoint elastic contacts have coalesced in the plastic case, and
    some new contacts have appeared. This may change the micro-contact
    distribution, influencing phenomena such as
    wear.}\label{fig:rough}
\end{figure}

Figure~\ref{fig:rough} shows a comparison between the actual contact
areas produced by elastic and elastoplastic contact simulations with
the same applied load
$W = 10^{-3}E \sqrt{\langle |\nabla h|^2 \rangle}$ (normalization by
the root-mean-square of slopes collapses the load to contact area
relation in the elastic case,
see~\citealp{hyun_finiteelement_2004}). It should be mentioned that
algorithm~\ref{alg:plastic_coupling} is not particularly well suited
for rough surface contact: it is by default not robust to large load
steps and requires tuning of the relaxation parameter, which renders
the convergence slow and very dependent on discretization and on
$\sigma\ysub / E$. The lack of robustness of the algorithm prevents
full use of the capacities of Newton-Krylov and spectral residual
solvers, which usually allow for large load steps. To our knowledge,
this has not yet been addressed, as recent work on elastic-plastic
contact with VIMs also use alternating projection approach of
algorithm~\ref{alg:plastic_coupling}~\citep{amuzuga_fully_2016,
  zhang_contact_2018}. Finding a coupling algorithm better suited than
algorithm~\ref{alg:plastic_coupling} to rough surface contact will be
part of our future work on the subject. For instance, the
elastoplastic contact problem considered here could be recast as a
second-order conic program. This suggests the possibility of resorting
to interior point methods, which have proved robust and efficient in
other contexts also involving large-scale simulations for non-linear
materials~\citep{bleyer_advances_2018}.

Hence, we have confined this preliminary study to small contact areas
and a rather large $\sigma\ysub / E$ ratio (which would correspond to
the behavior of elastomers). Noticeable differences are nonetheless
observed between the elastic and elastic-plastic solutions: with
plasticity, the contact area is 24\% larger (for the same load) and
details of micro-contacts are qualitatively different. The insets of
figure~\ref{fig:rough} show that some micro-contacts coalesce due to
plasticity while others grow larger. This may have implications on the
distribution of micro-contact areas, which plays a crucial role in
wear~\citep{frerot_mechanistic_2018}.

\begin{figure}
  \centering
  \includegraphics[width=0.6\textwidth,draft=false]{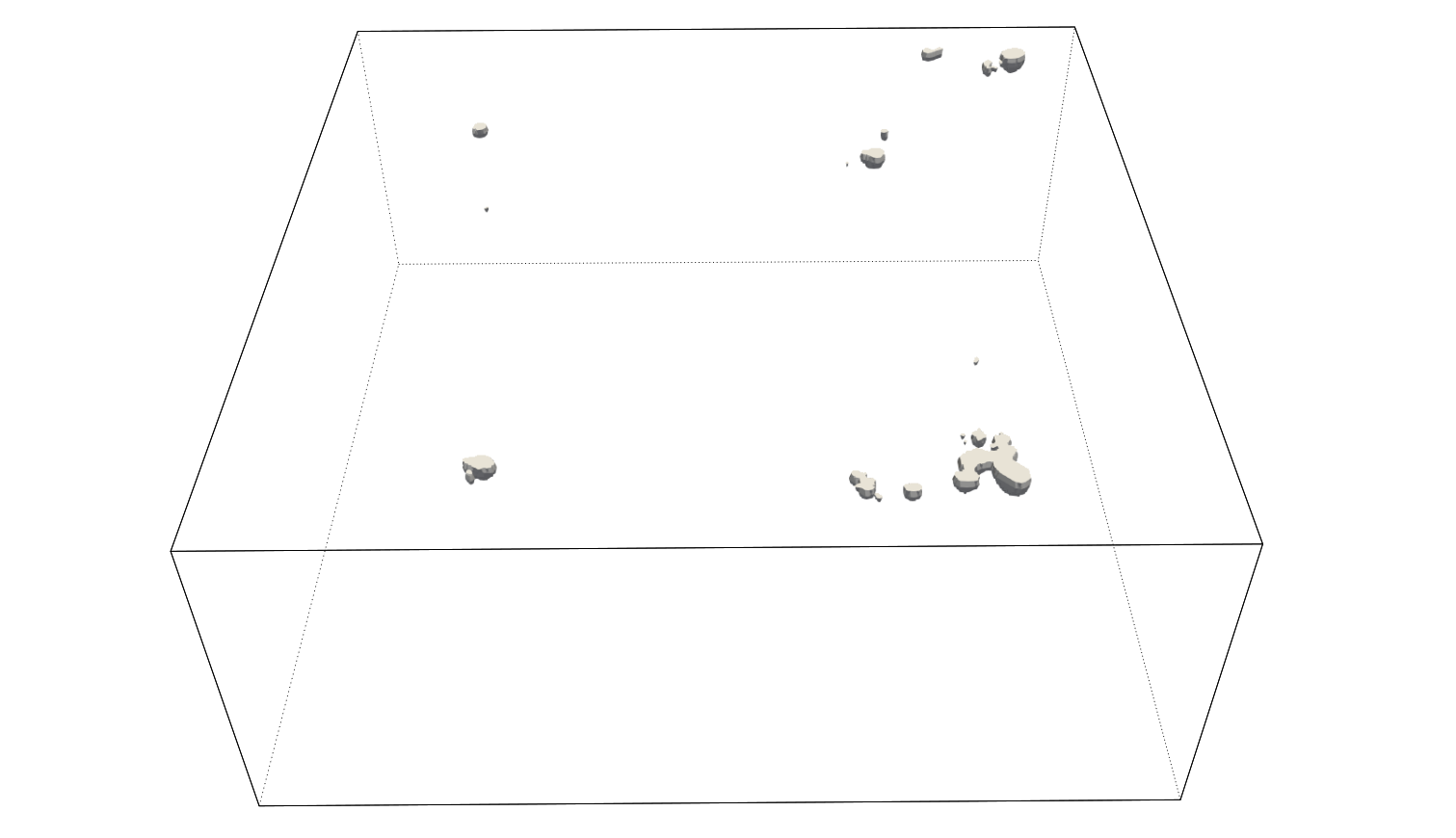}
  \caption{\textbf{Plastic volumes in material bulk}. As expected,
    they are located beneath contact zones (see
    figure~\ref{fig:rough}). Even for low contact area situations, the
    plastic activity reaches the contact
    surface.}\label{fig:plastic_volumes}
\end{figure}

Figure~\ref{fig:plastic_volumes} shows the plastic volumes in the bulk
of the material. Even for such a small contact area, plastic activity
in the subsurface material is enough to make disjoint micro-contacts
share the same plastic deformation zones. As the proposed method
allows an unprecedentedly fine representation of the plastic
deformations, one can exploit this data to compute statistics on the
distribution of plastic volumes.


\section{Conclusion}
We have developed a new, high-efficiency spectral method for solving
elastoplastic contact problems. It relies on a novel derivation of the
Mindlin fundamental solution directly in the 2D partial Fourier
coordinate space, allowing memory savings, error reduction and
acceleration of the remaining one-dimensional convolution in a
periodic setting, while exploiting the performance of the fast-Fourier
transform. We also show the mathematical soundness of the different
steps in the method (with the theory of distributions), justifying the
discretization of the integral operators which was simply assumed in
previous works on spectral methods in contact.

The performance of the method comes from the periodic nature of the
problem and its intimate relationship with the discrete Fourier
transform. In the investigation of physical phenomena, having a
periodic system is not a problem as long as surfaces are
representative, i.e.\ their largest wavelength is several times
smaller than the size of the
system~\citep{yastrebov_contact_2012}. This entails a large number of
points in the discretized system, which our method is capable of
handling while still considering the non-linear and volumetric nature
of plastic deformations. The latter can be resolved with an accuracy
fine enough for computation and statistics of plastic volumes, which
have until now never been studied in detail. As comparison, a
simulation of 100 million degrees of freedom with a finite-element
approach would require distributing the computational load over
several cluster nodes, whereas we have shown that our approach can
handle such a simulation on a single node (both the required memory
and computing time being two orders of magnitude smaller). The
performance of the global elastoplastic contact solver is currently
the bottleneck of the presented method and renders a systematic study
of rough surface contact difficult. Our next efforts will be to
replace the solver by an interior-point method, which have proven
efficient in a similar class of problems~\citep{bleyer_advances_2018}.

As mentioned by~\citet{vakis_modeling_2018}
and~\citet{muser_meeting_2017}, there are many tribological phenomena
that will benefit from accurate results of elastic-plastic rough
contact for which this method can be instrumental.

\section*{Acknowledgments}
L. F., G. A. and J.-F. M. acknowledge the financial support of the
Swiss National Science Foundation (grant \# 162569 ``Contact mechanics
of rough surfaces''). L. F. acknowledges the help of Claire Capelo for
the elastoplastic contact FEM code.


\FloatBarrier%
\setcounter{figure}{0}

\begin{appendix}

  \section{Invertibility and linear independence of
    \texorpdfstring{$\tens{A}^+$}{\textbf{A}+} and
    \texorpdfstring{$\tens{A}^-$}{\textbf{A}-}}\label{app:invertibility}

  Due to its structure, $\tens{A}^\pm$ has two unit eigenvalues. The
  determinant of $\tens{A}^\pm$ is therefore:
  \begin{align}
    \det(\tens{A}^\pm)
    = \text{trace}(\tens{A}^\pm)
    &= e^{\mp qy_3}\left(\frac{c}{c+2}qy_3 (\tens{\Delta}^\pm\cdot\tens{\Delta}^\pm) + 1\right)\\
    &= e^{\mp qy_3}\left(\frac{c}{c+2}qy_3\left(1-{\left(\frac{q_1}{q}\right)}^2 - {\left(\frac{q_2}{q}\right)}^2\right) + 1\right)
      = e^{\mp qy_3},
  \end{align}
  so does not vanish
  $\forall (\tens{q}, y_3) \in
  \mathbb{R}^2\times\mathbb{R}$. Therefore $\tens{A}^\pm$ is
  invertible and $\mathrm{rank}(\tens{A}^\pm) = 3$. The linear
  independence of $\tens{A}^+$ and $\tens{A}^-$ then stems from their
  being proportional to different exponential functions.

  \section{Proof of
    Theorem~\ref{thm:periodic_convolution}}\label{nonper:to:per}

  The displacement $\tens{u}=\Ncal[\tens{w}]$ ($\Ncal$ being the
  Mindlin integral operator) solves the problem
  \begin{equation}
    \text{(a) \ }\navier[\tens{u}] = -\text{div}\tens{w} \quad \text{in }\body, \qquad
    \text{(b) \ }\tens{T}[\tens{u}]=\tens{0} \quad \text{on }\partial\body \label{eigenstress:PDE}
  \end{equation}
  In the non-periodic case, the partial Fourier version of the above
  problem is the ODE system
  \begin{equation}
    \text{(a) \ }\fnavier(\tens{q})\cdot\ftens{u}(\tens{q},y_3)
    = -\ftens{\nabla}\cdot\ftens{w}(\tens{q},y_3) \quad \big(y_3\in\lbrack 0,\infty\lbrack \big), \qquad \qquad
    \text{(b) \ }\ftens{T}(\tens{q})\cdot\ftens{u}(\tens{q},x_3) =  \tens{0} ,
    \label{eigenstress:ODE}
  \end{equation}
  where $\tens{q}\in\Rbb^2$ acts as a
  parameter. Solving~\eqref{eigenstress:ODE} for arbitrary sources
  $\ftens{w}(\tens{q},y_3)$ yields
  $\ftens{u}(\tens{q},\smallbullet)=\widehat{\Ncal[\tens{w}]}(\tens{q},\smallbullet)$
  for any $\tens{q}\in\Rbb^2$, $\widehat{\Ncal[\tens{w}]}$ being given
  by~\eqref{N:conv:Fourier} with $\ftens{H}=\ftens{\nabla G}$.

  Consider now the case where $\tens{w}$ is $\Bcal_p$-periodic, so has
  the Fourier series form~\eqref{w:fourier}. Since
  $\Fcal[\exp(\rmi\tens{a}\cdot\bfxt)]=\delta(\tens{q}-\tens{a})$ for
  $\Fcal[\smallbullet]$ as defined by~\eqref{eqn:fourier}, this
  implies
  \begin{equation}
    \ftens{w}(\tens{q},y_3)
    = \sum_{\tens{k}\in\Zbb^2} \ftens{\mathrm{w}}(\tens{k}, x_3)\delta(\tens{q}-2\pi\bar{\tens{k}}).
  \end{equation}
  Using this in the right-hand side of~(\ref{eigenstress:ODE}a), we
  deduce that $\ftens{u}$ must also assume the form of a series of
  weighted Dirac distributions with the same supports. Hence
  $\tens{u}$ is also $\Bcal_p$-periodic and can be expressed as a
  Fourier series~\eqref{u:fourier} with its coefficients
  $\ftens{\mathrm{u}}(\tens{k},x_3)$ to be determined:
  \begin{equation}
    \ftens{u}(\tens{q},y_3) = \sum_{\tens{k}\in\Zbb^2} \ftens{\mathrm{u}}(\tens{k},x_3)\delta(\tens{q}-2\pi\bar{\tens{k}}).
  \end{equation}
  Applying the partial Fourier transform to
  problem~\eqref{eigenstress:PDE} and inserting
  $\ftens{w}(\tens{q},y_3)$, $\ftens{u}(\tens{q},y_3)$ as given above
  thus yields
  \begin{align}
    \fnavier(\tens{q})\cdot\sum_{\tens{k}\in\Zbb^2} \ftens{\mathrm{u}}(\tens{k}, x_3)\delta(\tens{q}-2\pi\bar{\tens{k}})
    &= -\sum_{\tens{k}\in\Zbb^2} \ftens{\nabla}\cdot\ftens{\mathrm{w}}(\tens{k}, x_3)\delta(\tens{q}-2\pi\bar{\tens{k}}) \qquad y_3\in\lbrack 0,\infty\lbrack, \\
    \ftens{T}(\tens{q})\cdot\sum_{\tens{k}\in\Zbb^2} \ftens{\mathrm{u}}(\tens{k}, x_3)\delta(\tens{q}-2\pi\bar{\tens{k}})
    &=  \tens{0}.
  \end{align}
  Using (for example) the distributional equality
  $\fnavier(\tens{q})\delta(\tens{q}-\tens{a})=\fnavier(\tens{a})\delta(\tens{q}-\tens{a})$,
  we deduce the relations
  \begin{equation}
    \text{(a) \ }\fnavier(2\pi\bar{\tens{k}})\cdot\ftens{u}(\tens{k},y_3)
    = -\ftens{\nabla}\cdot\ftens{\mathrm{w}}(\tens{k}, x_3) \quad y_3\in\lbrack 0,\infty\lbrack, \qquad
    \text{(b) \ }\ftens{T}(2\pi\bar{\tens{k}})\cdot\ftens{\mathrm{u}}(\tens{k}, x_3) = \tens{0},\qquad\qquad \tens{k}\in\Zbb^2
  \end{equation}
  which are formally identical to~\eqref{eigenstress:ODE} with the
  replacements $\tens{q}\to 2\pi\bar{\tens{k}}$,
  $\ftens{u}(\tens{q},y_3)\to\ftens{\mathrm{u}}(\tens{k}, x_3)$ and
  $\ftens{w}(\tens{q},y_3)\to\ftens{\mathrm{w}}(\tens{k}, x_3)$.  This
  shows that
  $\ftens{\mathrm{u}}(\tens{k},
  \smallbullet)=\widehat{\Ncal[\tens{w}]}(2\pi\bar{\tens{k}},\smallbullet)$,
  where $\widehat{\Ncal[\tens{w}]}$ is still given
  by~\eqref{N:conv:Fourier} with $\ftens{H}=\ftens{\nabla G}$. This
  completes the proof of the theorem.

\section{Simulation data}
In this appendix, we give the detailed geometry, discretization and
loading of each simulation. We note ${[a, b]}_n$ the $[a, b]$ interval
discretized in $n$ equally spaced values (including $a$ and $b$).

\subsection{Comparison
  with~\citet{mindlin_thermoelastic_1950}}\label{app:mindlin_comparison}
\noindent With $a$ the radius of the inclusion, the free parameters
are fixed to:
\begin{itemize}[itemsep=0ex]
\item Inclusion center $x_3$ coordinate $c = 2a$
\item System size ${[0, 15c]}^2 \times [0, 5c]$
\item Discretization $\tens{N} = (128, 128, 126)$.
\item $\nu = 0.3$
\end{itemize}
Values of $E$, $\alpha$ or $T$ do not influence the normalized
results.

\subsection{Comparison
  with~\citet{hardy_elastoplastic_1971}}\label{app:hardy_comparison}
\noindent With $R$ the radius of the indenter, the free parameters are
fixed to:
\begin{itemize}[itemsep=0ex]
\item System size ${[0, \frac{20}{3}R]}^2\times[0, \frac{10}{3}R]$
\item Discretization $\tens{N} = (81, 81, 32)$
\item $\nu = 0.3$
\item Loading
  $W / W\ysub \in {[0.9, 15.5]}_{20} \cup \{1, 2.1, 6.4, 15.5\}$
  (sorted for monotonic loading history)
\end{itemize}
Values of $E$ or $\sigma\ysub$ do not influence the normalized
results. The tolerance of algorithm~\ref{alg:plastic_coupling} is set
to $10^{-9}$.

\subsection{Comparison with
  Akantu~\citep{richart_implementation_2015}}\label{app:akantu_comparison}
\noindent With $R$ the radius of the indenter, the free parameters are
fixed to:
\begin{itemize}[itemsep=0ex]
\item System size ${[0, \frac{20}{3}R]}^2\times[0, \frac{10}{3}R]$
\item Discretization $\tens{N} = (81, 81, 32)$
\item FEM discretization: 81$\times$81 surface nodes, 54 nodes in the
  $x_3$ direction, 85975 total nodes and 460246 linear tetrahedron
  elements. The mesh is refined at the contact interface.
\item $\nu = 0.3$
\item Loading: $W / W\ysub \in {[0.7, 5]}_{20}$
\end{itemize}
The tolerance of algorithm~\ref{alg:plastic_coupling} is set to
$10^{-9}$. The source code of Akantu is available on
\url{https://akantu.ch}. The finite-element mesh is generated with
GMSH~\citep{geuzaine_gmsh_2009} and results are visualized with
Paraview~\citep{ayachit_paraview_2015}.

\subsection{Scaling simulations}\label{app:scaling}
With a cubic domain of side $L = 1$, the number of points in the
discretized system direction (with $N_1 = N_2 = N_3$) in both FEM and
VIM simulations vary in $\{4^3, 8^3, 16^3, 32^3, 64^3, 128^3\}$. The
source code for the finite elements simulation can be found here:
\url{https://doi.org/10.5281/zenodo.2613614}~\citep{frerot_supplementary_2019}.

\subsection{Rough surface simulation}\label{app:rough_surface}
With $L$ the horizontal system size, the free parameters are fixed to:

\begin{figure}
  \centering \includegraphics[draft=false]{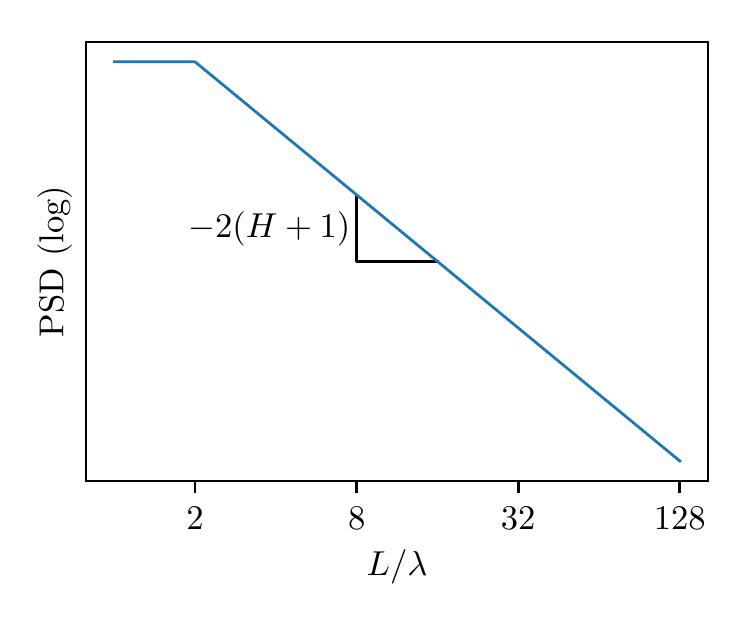}
  \caption{\textbf{Schematic of power spectrum density} for the
    self-affine rough surface used in
    Section~\ref{sec:rough}.}\label{fig:psd}
\end{figure}

\begin{itemize}[itemsep=0ex]
\item System size ${[0, L]}^2\times[0, \frac{2L}{5}]$
\item Surface parameters (cf.\ figure~\ref{fig:psd}):
  \begin{itemize}[itemsep=0ex]
  \item $L / \lambda_l = 1$
  \item $\lambda_l / \lambda_r = 2$
  \item $\lambda_r / \lambda_s = 64$
  \item $\lambda_s / \Delta l = 4$
  \item $H = 0.8$
  \end{itemize}
\item Discretization $\tens{N} = (512, 512, 64)$.
\item Loading $W = 10^{-3}E\sqrt{\langle |\nabla h|^2 \rangle}$ (where
  $\sqrt{\langle |\nabla h|^2 \rangle}$ is the root-mean-square of
  slopes).
\end{itemize}
The tolerance of algorithm~\ref{alg:plastic_coupling} is set to
$10^{-7}$. The relaxation parameter is set to $\lambda = 0.3$.

\section{Complexity of integration with
  cutoff}\label{sec:app-complexity}
The cutoff condition on the integration of an element of center $x_c^i$
for a point of interest $x$ is that
$q|x_c^i - x_3| < \epsilon_\text{co} \Leftrightarrow |x_c^i - x_3| <
\epsilon_\text{co} / \sqrt{q_1^2 + q_2^2}$. So for a given value of
$x_3$, the number of operations needed for the computation of
integral~\eqref{eqn:typical_integral} for all discrete values of
$\tens{q}$ is of the order of:
\begin{equation}
  \sum_{\tens{k}\in\Zbb^2_\tens{N}}{\frac{N_3}{\sqrt{k_1^2 + k_2^2}}}.
\end{equation}
Since the cutoff is smaller at high wavenumbers the number of terms to
be accounted for decreases. We can approximate this series with an
integral, which gives the value $N_3\sqrt{N_1^2 + N_2^2}$. Setting
$N_1 = N_2 = N_3$ gives an asymptotic complexity of
$\Ocal(N_3^2\sqrt{N_1^2+N_2^2}) \sim \Ocal(N)$.

\end{appendix}


\bibliographystyle{elsarticle-harv}
\bibliography{bibliography/biblio}


\end{document}